\documentclass[a4paper,fleqn]{cas-dc}

\usepackage[numbers]{natbib}
\usepackage{algorithm}
\usepackage{algpseudocode}
\usepackage{amsmath,mathtools,amssymb}

\newtheorem{theorem}{Theorem}
\newtheorem{lemma}{Lemma}

\newtheorem{proposition}{Proposition}

\newtheorem{ass}{Assumption}
\newtheorem{definition}{Definition}

\newcommand{\R}{\mathbb{R}}

\newcommand\norm[1]{\ensuremath{\lVert#1\rVert}}

\DeclareMathOperator\diag{diag}

\def\tsc#1{\csdef{#1}{\textsc{\lowercase{#1}}\xspace}}
\tsc{WGM}
\tsc{QE}

\begin{document}
\let\WriteBookmarks\relax
\def\floatpagepagefraction{1}
\def\textpagefraction{.001}

\shorttitle{}    

\shortauthors{Daniele Ravasio et~al.}

\title [mode = title]{Physics-informed structured learning of a class of recurrent neural networks with guaranteed properties}  



%

\author[1,2]{Daniele Ravasio}
\ead{daniele.ravasio@polimi.it}
\cormark[1]
\author[1]{Claudia Sbardi}
\ead{claudia.sbardi@mail.polimi.it}
\author[1]{Marcello Farina}
\ead{marcello.farina@polimi.it}
\author[2]{Andrea Ballarino}
\ead{andrea.ballarino@stiima.cnr.it}





\affiliation[1]{organization={Dipartimento di Elettronica, Informazione e Bioingegneria (DEIB), Politecnico di Milano},
            addressline={Via Ponzio 34/5}, 
            city={Milan},
            postcode={20133}, 
            country={Italy}}
\affiliation[2]{organization={Istituto di Sistemi e Tecnologie Industriali Intelligenti per il Manifatturiero Avanzato (STIIMA) - Consiglio Nazionale delle Ricerche (CNR)},
            addressline={Via A. Corti 12}, 
            city={Milan},
            postcode={20133}, 
            country={Italy}}





\cortext[1]{Corresponding author}

\makeatletter
\def\printorcid{}
\makeatother



\begin{abstract}
This paper proposes a physics-informed learning framework for a class of recurrent neural networks tailored to large-scale and networked systems. The approach aims to learn control-oriented models that preserve the structural and stability properties of the plant. The learning algorithm is formulated as a convex optimisation problem, allowing the inclusion of linear matrix inequality constraints to enforce desired system features. Furthermore, when the plant exhibits structural modularity, the resulting optimisation problem can be parallelised, requiring communication only among neighbouring subsystems. Simulation results show the effectiveness of the proposed approach.
\end{abstract}



\begin{keywords}
 Large-scale systems \sep physics-informed learning \sep recurrent neural networks
\end{keywords}

\maketitle

\section{Introduction}
\subsection{Motivation}
The modelling and control of large-scale and networked systems (LSSs) composed of multiple interacting subsystems is a research area that has attracted increasing attention, driven by the need to manage complex, high-dimensional plants~\cite{maestre2014distributed,tang2018network}. Examples include power networks, manufacturing processes, and transportation systems. Within this setting, centralised control strategies are often inadequate due to scalability issues, communication constraints, and privacy concerns. A common approach to address these challenges is to exploit the structural modularity of the plant to decompose the overall control problem into smaller, weakly coupled subproblems, each managed by a local controller, leading to decentralised or distributed control architectures~\cite{scattolini2009architectures}. However, the design of decentralised or distributed control schemes relies on the availability of accurate plant models that exhibit a structure consistent with the intended decomposed control system architecture~\cite{tang2018network,farina2018distributed}. 

The definition of first-principles models for LSSs often becomes impractical due to their high dimensionality and complexity.
In this context, recurrent neural networks (RNNs)~\cite{lecun2015deep} have emerged as powerful tools for modelling complex plants, owing to their ability to capture long-term and nonlinear temporal dependencies directly from data. Nevertheless, purely black-box models may fail to capture the underlying comprehensive physical/structural properties of the real system. This lack of physical consistency may result in unreliable or non-interpretable models, which might be inadequate for control design. This limitation has motivated recent research on novel techniques~\cite{bonassi2022recurrent,hao2022physics} aimed at incorporating prior physical knowledge, such as structural information, directly into the model training process.

Besides structural properties, when the plant exhibits properties such as input-to-state stability (ISS,~\cite{sontag2008input}) or incremental ISS ($\delta$ISS,~\cite{bayer2013discrete}), it is desirable to enforce them to the RNN model. Enforcing stability properties at the learning stage is relevant from two complementary perspectives~\cite{bonassi2022recurrent}. First, from a physical perspective, it aims to ensure consistency with prior knowledge of the plant qualitative behaviour, thereby improving the reliability and interpretability of the obtained model. Second, from a control-oriented perspective, it provides a theoretical tool that can be directly leveraged during control design. This idea has been explored in several works for the design of control schemes with stability and performance guarantees (see, e.g.,~\cite{bonassi2024nonlinear,schimperna2024robust}). The $\delta$ISS, in particular, is a robust stability property that guarantees the existence of robust positively invariant sets~\cite{bayer2013discrete}, which are essential ingredients in the design of robust control algorithms~\cite{ravasio2024lmi,ravasio2026recurrent}.

An important challenge is that standard RNN training techniques are generally centralised and computationally demanding~\cite{keuper2016distributed}, which makes them unsuitable for LSSs characterised by high dimensionality and, in some cases, limited availability of measurements from all subsystems. This limitation may stem, for example, from privacy concerns or from the geographical distribution of the subsystems. In addition, structural changes in the plant, such as modification of communication constraints, maintenance interventions, or component failures, can require reconfigurability of both the model and the controller~\cite{maestre2014distributed}. These issues highlight the need for faster and scalable model learning strategies that can operate in a distributed manner at the subsystem level.
\subsection{Statement of the problem}
In this work we assume that the dynamic plant/system $\mathcal{P}$ under analysis is endowed with manipulable inputs, collected in the input vector $u\in\mathbb{R}^m$, and measurable outputs, collected in the output vector $y\in\mathbb{R}^p$.
The aim of this work is to develop a framework for learning a physics-informed recurrent neural network model of the plant $\mathcal{P}$, in which selected physical features of the system are directly embedded in the learning process.
Concerning the available data, we make the following assumption. 
\begin{ass}\label{ass:dataset}
An informative dataset of input-output data previously collected from the system $\mathcal{P}$ is available. The data consists of an applied input sequence $\mathcal U_\mathrm d=\{u_\mathrm d(k)\}_{k=1}^{N_\mathrm d}$, where the scalar $N_\mathrm d$ represents the sequence length, and a measured output sequence $\mathcal Y_\mathrm d=\{y_\mathrm d(k)\}_{k=1}^{N_\mathrm d}$. 
\end{ass}
The resulting model is intended to provide an accurate representation suitable for control design. The focus is on two key aspects: \emph{(i)} the construction of a modular model inspired by the plant structure, which can be directly used in the synthesis of a distributed or a decentralised control scheme, and \emph{(ii)} the incorporation of stability guarantees, so that the resulting model inherits the stability properties of $\mathcal{P}$.
\subsubsection{Imposing the modular plant structure}
\label{subsec:modular_system}
Many large-scale and complex plants are characterised by a structural modularity which can be unveiled by physical inspection or through data-driven approaches (see, e.g., \cite{materassi2012problem}).
In particular, we can often define a number $n_\mathrm s$ of subplants $\mathcal{P}_i$, where $i\in\mathcal{I}=\{1,\dots,n_\mathrm s\}$, each characterised by a local manipulable input vector $u_i\in\R^{m_i}$ and a vector $y_i\in\R^{p_i}$ of local measurable outputs. In this work we assume that $\sum_{i=1}^{n_\mathrm s} m_i=m$ and $\sum_{i=1}^{n_\mathrm s} p_i=p$, in such a way that $u$ and $y$ are partitioned in a non-overlapping fashion, i.e., 
$$\begin{array}{lcl}
     u&=&\begin{bmatrix}
         u_1^\top&\dots&u_{n_\mathrm s}^\top
     \end{bmatrix}^\top  \\
     y&=&\begin{bmatrix}
         y_1^\top&\dots&y_{n_\mathrm s}^\top
     \end{bmatrix}^\top
\end{array}$$
We assume here that physical interconnections among subplants can also be defined, see, e.g.,  Figure \ref{fig:modular_S}. 
\begin{figure}
     \centering
    \includegraphics[width=0.8\columnwidth]{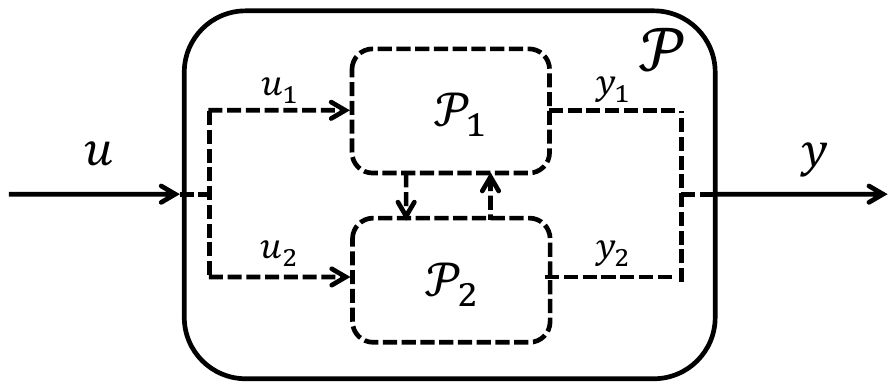}
    \caption{Modular system structure for a simple case where $n_\mathrm s=2$.}
      \label{fig:modular_S}
\end{figure}
In general, we say that the subplant $\mathcal{P}_j$ has a direct influence on (or, in a graph-theoretical terminology, is neighbor of) subplant $\mathcal{P}_i$ (with $i\neq j$) if there exists an interconnection vector $\nu^{\mathcal{P}}_{ij}$ of variables of $\mathcal{P}_j$ which is an input for $\mathcal{P}_i$. Vector $\nu^{\mathcal{P}}_{ij}$ may include entries of $u_j$ and of $x_j$, the latter defining the internal state of $\mathcal{P}_j$ (some of whose elements may be measurable, i.e., included in $y_j$). This induces the definition of the graph $\mathcal{G}^{\mathcal{P}}=\{\mathcal{I},\mathcal{V}^{\mathcal{P}}\}$, whose edges correspond to the subplants $\mathcal{P}_i$ for all $i\in\mathcal{I}$, and  the pair $(i,j)$ is a vertex (i.e., $(i,j)\in\mathcal{V}^{\mathcal{P}}$) if and only if $\nu^{\mathcal{P}}_{ij}\neq 0$.\\
More specifically, for all $i\in\mathcal I$ we can define the following neighboring sets: $\mathcal{N}^{\mathcal{P}}_{u,i}\coloneq\{j\in\mathcal I\setminus\{i\}:\nu^{\mathcal{P}}_{ij}\neq 0$ includes entries of $u_j\}$ and
$\mathcal{N}^{\mathcal{P}}_{x,i}\coloneq\{j\in\mathcal I\setminus\{i\}:\nu^{\mathcal{P}}_{ij}\neq 0$ includes entries of $x_j\}$.\\
%
In Section \ref{sec:learning structured models} we propose a methodology to identify $n_{\mathrm s}$ submodels $\mathcal{M}_i$, with $i=1,\dots,n_\mathrm s$, each endowed with an input/output pair $(u_i,y_i)$, and interconnected with each other through an interconnection topology induced by the modular system structure defined above.\\ 
Note that, even in model-based contexts, the model decomposition problem is crucial and critical. In fact, as analysed, e.g., in~\cite{farina2018distributed}, while overlapping decompositions lead to higher-order submodels $\mathcal{M}_i$ with dense (e.g., all-to-all) interconnection structures, non-overlapping decompositions typically lead to submodels $\mathcal{M}_i$ with reduced orders and minimal interconnection structures (which may, unless we perform further simplifications, correspond to the plant real interconnection structures). In view of the better scalability of the latter decomposition, in Section \ref{sec:learning structured models} we propose a data-based counterpart of non-overlapping decompositions where we learn submodels, partially coupled through inputs and/or internal states, reflecting the interconnection graph of the real plant.
%
%
\subsubsection{Imposing stability guarantees}
When the plant $\mathcal P$ 
enjoys certain stability properties, it is desirable to enforce the same properties on the model. The stability properties of $\mathcal P$ may be deduced from the qualitative behaviour of the plant or inferred numerically from available input-output data.\\
In this work, in view of the fact that we cast the learning problem as a convex optimisation one, we can enforce the stability properties exhibited by the plant through suitable constraints expressed in the form of linear matrix inequalities (LMIs) \cite{boyd1994linear}. In Section~\ref{sec:diss_learning} we propose an LMI-based sufficient condition on the weights of the considered model class that guarantees its $\delta$ISS and exploit this condition to enforce this property, during the learning phase, on the model. Notably, when the plant is characterised by a modular structure, the proposed methodology can be leveraged to impart the $\delta$ISS property both to the local submodels $\mathcal M_i$, $i = 1,\dots,n_s$, and to the overall structured model resulting from their interconnection.\\
Note that, although in this work we consider the $\delta$ISS property, the general methodology proposed in Section~\ref{sec:diss_learning} is not limited to this setting and can be extended to enforce alternative properties, provided that suitable conditions on the weights of the considered RNN model class can be expressed in terms of LMIs.
\subsection{State of the art}
The inclusion of physical knowledge in the definition of learning strategies for RNN models has received considerable attention in the literature as an attempt to overcome the limitations of black-box modelling. 
Along this line of research, several physics-informed approaches have been developed within standard gradient-based training frameworks~\cite{bonassi2022recurrent,hao2022physics,bradley2022perspectives}.

One of the most widely adopted approaches to incorporate physical constraints consists in augmenting the training objective with suitably designed regularisation terms. Physics-guided loss functions can be designed to enforce physical laws such as conservation principles or stability inequalities. With specific reference to stability properties, this strategy has been employed to enforce ISS and $\delta$ISS conditions in various RNN architectures~\cite{bonassi2022recurrent}.\\
However, this approach suffers from several limitations~\cite{krishnapriyan2021characterizing}. First of all, the conditions to be imposed on the model parameters to impart stability are commonly particularly conservative, often compromising the quality of the so-obtained learned models. Also, since the these conditions are not enforced as strict constraints, there are no guarantees that the resulting model will satisfy the desired physical properties, necessitating a posteriori verification. Additionally, if not designed appropriately, these regularisation terms may lead to optimisation issues or to a degradation of the modelling accuracy. Finally, finding a trade-off between modelling performance and the satisfaction of the physical conditions can be non-trivial and typically involves trial-and-error tuning.

A notable class, essentially different from the ones addressed in the latter works, is represented by the class of recurrent equilibrium networks (RENs) introduced in~\cite{revay2023recurrent}. These models admit a direct parameterisation that allows contractivity and robust stability guarantees to be enforced without the need to introduce additional penalty terms in the loss function. However, the approach proposed in~\cite{revay2023recurrent} may become difficult to apply when one seeks to impose specific structural constraints on the model matrices, for instance to reflect the interconnection topology of a large-scale plant.
An alternative to physics-guided loss functions consists of defining a-priori model classes which embed physical prior knowledge. Physics-guided architectures can be constructed by exploiting the intrinsic modularity of RNNs to enforce consistency with physical properties such as monotonicity or zero-sum constraints. This approach is particularly relevant when the data-generating plant exhibits a modular structure. In this case, the underlying structural information can be reflected in the model by adopting a sparsely connected RNN that is consistent with the system topology. This approach has been investigated in \cite{wu2020process} for enforcing structural properties in chemical flowsheet models and in \cite{bonassi2022recurrent} for learning a structured RNN model of a chemical process.\\
However, gradient-based training strategies are generally inherently non-convex and prone to the presence of local minima. Moreover, they suffer from significant scalability limitations, as widely adopted methods such as \textit{backpropagation through time} typically rely on centralised implementations and are difficult to parallelise~\cite{lecun2015deep,keuper2016distributed}. This can restrict their applicability to large-scale systems, where computational and memory requirements grow rapidly with the system size, potentially resulting in the so-called \textit{curse of dimensionality}~\cite{hao2022physics}. 

\subsection{Paper objectives and structure}
Motivated by the considerations made in the previous sections, in this work we propose a novel learning framework for RNNs having the structure of REN models. In particular, inspired by the approach proposed in \cite{jaeger2001echo} in the case of echo state networks (ESNs), we adopt a computationally lightweight gradient-free method which relies on a least-squares minimisation problem or, in case of noisy data, on a set-membership approach extending the work proposed in \cite{d2025data} in several directions. Importantly, in this work we apply this approach to the identification of control-oriented models that preserve the structural and stability properties of the plant. The learning algorithm is formulated as a convex optimisation problem, allowing the inclusion of LMI constraints to enforce desired system features. Importantly, when the plant exhibits structural modularity, the resulting optimisation problem can be parallelised, requiring communication only among neighbouring subsystems and resulting scalable and numerically well suited to tackle large-scale and structured plants.\\
The paper is structured as follows. In Section \ref{sec:model} we first define the considered model class, its features, as well as its well-posedness property. Then, in Section \ref{sec:training_alg} we define the main approaches used for well-posed model training; these approaches are exploited in Sections \ref{sec:learning structured models} and \ref{sec:diss_learning} in case we need to identify models with a plant-inspired modular structure and with embedded guaranteed stability properties, respectively. Simulation tests are reported in Section \ref{sec:SIMULATIONS} and conclusions are drawn in Section \ref{sec:conclusions}. Eventually, for clarity of exposition, all proofs are reported in the Appendix.
\subsection{Notation and preliminaries} 
Given a vector $v \in \mathbb{R}^n$, $v^{(i)}$ denotes its $i$-th entry. Given a matrix $M \in \mathbb{R}^{n \times n}$, $M^{(i)}$ denotes its $i$-th row, and $M^\top$ its traspose. Let $\mathbb{Z}_{+}$ denote the set of positive integers (excluding zero), $\mathbb{S}_+^n$ the set of real symmetric positive definite matrices, and $\mathbb{D}_+^n$ the set of real diagonal positive definite matrices. We denote the sequence $(u(0), \dots, u(N))$ by \(\{u(k)\}_{k=0}^N\). The matrix $I_n$ denotes the $n \times n$ identity matrix. Given $n$ matrices $M_{1},\dots,M_{n}$, we denote by $\diag(M_{1},\dots,M_{n})$ the block-diagonal matrix with $M_{1},\dots,M_{n}$ on its main diagonal blocks. 
Given an index set $\mathcal I=\{1,\dots,n\}\subseteq\mathbb Z_{+}$, its cardinality is denoted by $|\mathcal I|$. Assuming that all matrices $M_i$ have the same number of rows, we define $[M_i]_{i\in\mathcal I}\coloneq[M_{i_1}\,\dots\, M_{i_n}]$. Consider two index sets $\mathcal I, \mathcal J \subseteq \{1,\dots,n\}$ and a matrix $M$, which is partitioned into $n\times n$ non-overlapping blocks $M_{i,j}$. We denote this block partition by $M=[M_{i,j}]_{i\in\mathcal I, j\in\mathcal J}$.  
Also, we denote by $(M)_{\mathcal I,\mathcal J}$ the matrix obtained by keeping only the row-blocks of $M$ indexed by $\mathcal I$ and the column-blocks indexed by $\mathcal J$.
Given a vector $v$ partitioned into $m$ non-overlapping subvectors $v_i$, $i=1,\dots,m$, we denote by $(v)_{\mathcal I}$ the vector obtained by keeping only the subvectors $v_i$ such that $i \in \mathcal I$. Given a vector $v\in\R^m$ and a set $\tilde{\mathcal V}\subseteq\R^{m}$, we  define the function $\text{dist}(v,\tilde{\mathcal V})=\min_{\tilde v\in\tilde{\mathcal V}}\norm{v-\tilde v}^2$
A continuous function $\alpha:\R_{\geq0}\to\R_{\geq0}$ is a class $\mathcal K$-function if $\alpha(s)>0$ for all $s>0$, it is strictly increasing and $\alpha(0)=0$. Also, a continuous function $\alpha:\R_{\geq0}\to\R_{\geq0}$ is a class $\mathcal K_\infty$-function if it is a class $\mathcal K$-function and $\alpha(s)\to\infty$ as $s\to\infty$. Finally, a continuous function $\beta:\R_{\geq0}\times\mathbb Z_{\geq0}\to\R_{\geq0}$ is a class $\mathcal{KL}$-function if $\beta(s,k)$ is a class $\mathcal K$-function with respect to $s$ for all $k$, it is strictly decreasing in $k$ for all $s>0$, and $\beta(s,k)\to0$ as $k\to\infty$ for all $s>0$. 
\smallskip\\
Consider a general nonlinear discrete-time system described by,
\begin{align}
    x(k+1)&=f(x(k),u(k))\label{eq:general_sys}
\end{align}
where $k\in\mathbb Z_{\geq 0}$ is the discrete-time index, $x\in\R^n$ is the state vector, $u\in\R^m$ is the input vector, and $f:\R^n\times\R^m\to\R^n$.\\
We introduce the following definitions \cite{revay2023recurrent,bayer2013discrete}.
\begin{definition}\label{def:contractivity}
System~\eqref{eq:general_sys} is said to be contracting with rate $\alpha\in(0, 1)$ if, for any two initial conditions $x_a(0)$, $x_b(0) \in \mathbb{R}^{n}$, given the same sequence $\{u(k)\}_{k=0}^N$, where $N\in\mathbb Z_+$, the state sequences $x_a(k)$, $x_b(k)$ satisfy,
    \begin{equation*}
        \norm{x_a(k)-x_b(k)} \leq   \rho\alpha^k\norm{x_a(0)-x_b(0)},
    \end{equation*}
     for some $\rho\in\R_+$.\hfill{}$\square$
\end{definition}
\begin{definition}\label{def:delta_ISS}
    System~\eqref{eq:general_sys} is said to be $\delta$ISS with respect to $\mathcal X\subseteq\R^n$ and $\mathcal U\subseteq\R^m$ if $\mathcal X$ is robust positively invariant for~\eqref{eq:general_sys} and there exist functions $\beta\in\mathcal{KL}$ and $\gamma\in\mathcal{K_\infty}$ such that for any $k\in\mathbb{Z}_{\geq0}$, any pair of initial states $x_a(0),x_b(0)\in\mathcal X$, and any pair of input $u_a,u_b\in\mathcal U$, it holds that
    \begin{multline*}
    \norm{x_a(k)-x_b(k)}{\leq}\beta(\norm{x_a(0)-x_b(0)},k)\\+\gamma(\max_{h\geq 0}\norm{u_a(h)-u_b(h)}).
    \end{multline*}\hfill{}$\square$
\end{definition}
\begin{definition}
A function $V(x_a,x_b)$ is a dissipation-form $\delta$ISS Lyapunov function for system \eqref{eq:general_sys} if $\mathcal X$ is robust positively invariant for~\eqref{eq:general_sys} and there exist $\mathcal{K_\infty}$-functions $\alpha_1$, $\alpha_2$ and $\alpha_3$, and a $\mathcal{K}$-function $\alpha_4$ such that, for any pair of states $x_a(k)\in\mathcal{X}$ and $x_b(k)\in\mathcal{X}$, and any pair of inputs $u_a(k)\in\mathcal{U}$ and $u_b(k)\in\mathcal{U}$, it holds that
    \begin{equation*}
       \alpha_1(\norm{x_a(k)-x_b(k)})\!\leq\! V(x_a(k),x_b(k))\!\leq\!\alpha_2(\norm{x_a(k)-x_b(k)}),
    \end{equation*}
    \begin{multline*}
    V(x_a(k+1),x_b(k+1))-V(x_a(k),x_b(k))\\\leq -\alpha_3(\norm{x_a(k)-x_b(k)})+\alpha_4(\norm{u_a(k)-u_b(k)}).
    \end{multline*}\hfill{}$\square$
\end{definition}
A sufficient condition such that system~\eqref{eq:general_sys} is $\delta$ISS is stated in the following theorem~\cite{bayer2013discrete}. 
\begin{theorem}\label{th:local_dISS}
   If system~\eqref{eq:general_sys} admits a dissipation-form
    $\delta$ISS Lyapunov function over the sets $\mathcal{X}$ and $\mathcal{U}$, then it is $\delta$ISS with respect to such sets, in the sense of Definition~\ref{def:delta_ISS}.
    \hfill{}$\square$
\end{theorem}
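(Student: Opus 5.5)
The plan is to adapt the standard discrete-time ISS-Lyapunov argument (Jiang--Wang) to the incremental setting. Fix initial states $x_a(0),x_b(0)\in\mathcal X$ and input sequences $u_a,u_b$ with values in $\mathcal U$. Since $\mathcal X$ is robust positively invariant, both trajectories remain in $\mathcal X$ for all $k$. Hence the two inequalities in the definition of a dissipation-form $\delta$ISS Lyapunov function hold along the whole pair of trajectories. Write $e(k)=x_a(k)-x_b(k)$, $W(k)=V(x_a(k),x_b(k))$ and $d=\sup_{h\ge0}\norm{u_a(h)-u_b(h)}$. The goal is a scalar recursion on $W$ that yields a $\mathcal{KL}$ bound plus a gain in $d$.

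First, I convert the dissipation inequality into a comparison inequality in $W$ alone. Using $\norm{e}\ge\alpha_2^{-1}(W)$ gives
\[
W(k+1)\le W(k)-\alpha_3\circ\alpha_2^{-1}(W(k))+\alpha_4(d).
\]
Set $\hat\alpha=\alpha_3\circ\alpha_2^{-1}$. One may assume without loss of generality that $\mathrm{id}-\hat\alpha$ is nonnegative, because $W(k+1)\ge0$ forces $\hat\alpha(s)\le s+\alpha_4(d)$. The standard trick is to replace $\hat\alpha$ by a smaller $\mathcal K_\infty$ function $\tilde\alpha$ such that $\mathrm{id}-\tilde\alpha\in\mathcal K$. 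Next I split into two regimes using a threshold $b=\tilde\alpha^{-1}(2\alpha_4(d))$, or the analogous construction with $\rho\in\mathcal K_\infty$, $\rho<\mathrm{id}$.
\begin{itemize}
\item If $W(k)\ge b$, then $W(k+1)\le W(k)-\tfrac12\tilde\alpha(W(k))$, so $W$ strictly decreases.
\item If $W(k)<b$, then $W(k+1)\le b+\alpha_4(d)$, where I used $W-\tilde\alpha(W)\le W$.
\end{itemize}
Consequently the sublevel set $\{W\le b+\alpha_4(d)\}$ is forward invariant. Outside it, $W$ obeys the decrease recursion $w^+\le w-\tfrac12\tilde\alpha(w)$.

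Second, I build the $\mathcal{KL}$ function. The scalar recursion $w(k+1)=w(k)-\tfrac12\tilde\alpha(w(k))$ with $w(0)=s$ is monotone in $s$ and decreasing to $0$, which gives a $\mathcal{KL}$ function $\hat\beta(s,k)$. For this I invoke the standard comparison lemma (Jiang--Wang, Lemma 4.3), after enforcing monotonicity of $w\mapsto w-\tfrac12\tilde\alpha(w)$ by a further reduction of $\tilde\alpha$. Let $k_0$ be the first entry time into the invariant set. Before $k_0$ we have $W(k)\le\hat\beta(W(0),k)$, and after $k_0$ we have $W(k)\le b+\alpha_4(d)$. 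Therefore
\[
W(k)\le\max\{\hat\beta(W(0),k),\,\tilde\gamma(d)\},\qquad \tilde\gamma=\tilde\alpha^{-1}\circ2\alpha_4+\alpha_4 .
\]

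Finally, I use the sandwich bounds and $\max\{a,b\}\le a+b$, and apply $\alpha_1^{-1}$ together with the weak triangle inequality $\alpha(a+b)\le\alpha(2a)+\alpha(2b)$. This gives
\[
\norm{e(k)}\le\alpha_1^{-1}\!\big(2\hat\beta(\alpha_2(\norm{e(0)}),k)\big)+\alpha_1^{-1}\!\big(2\tilde\gamma(d)\big),
\]
which is exactly the bound in Definition~\ref{def:delta_ISS}. If needed, $\gamma$ is upper-bounded by a $\mathcal K_\infty$ function, for instance by adding $\mathrm{id}$.

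The main obstacle is the comparison step: making the reduced $\tilde\alpha$ yield a genuinely $\mathcal{KL}$ bound, with strict decrease in $k$ and continuity. This needs the monotone-modification trick for $\mathrm{id}-\tilde\alpha$. I would handle it by citing the Jiang--Wang construction rather than redoing it. Since the statement is itself quoted from~\cite{bayer2013discrete}, the proof in the paper can simply refer there.
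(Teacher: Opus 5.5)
Your argument is correct and is essentially the standard Jiang--Wang ISS-Lyapunov proof applied to the error $e=x_a-x_b$; the paper gives no proof of its own and simply imports the result from \cite{bayer2013discrete}, where it is obtained along these lines. One minor point: justifying $\hat\alpha\le\mathrm{id}$ via $W(k+1)\ge 0$ only yields $\hat\alpha(s)\le s+\alpha_4(d)$, but the step is superfluous since any smaller $\mathcal K_\infty$ function may replace $\hat\alpha$, and once $\mathrm{id}-\tilde\alpha\in\mathcal K$ the map $w\mapsto w-\tfrac12\tilde\alpha(w)$ is already strictly increasing, so its iterates give the $\mathcal{KL}$ bound directly without a further reduction or an external comparison lemma.
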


%
%
\section{The selected recurrent neural network class}
\label{sec:model}
\subsection{The recurrent neural network model}
The RNN considered in this paper is a deep (i.e., multi-layer) architecture comprising $n$ neurons in the reservoir, whose states are collected in the vector $x\in\R^n$. The RNN takes an input $u\in\R^m$ and produces an output $y\in\R^p$. In order to simplify the learning process, we adopt a training approach inspired by the ESN training algorithm proposed in~\cite{jaeger2001echo}. Accordingly, the RNN model is described as
\begin{subequations}\label{eq:RNN_model}
\begin{align}
&x(k+1)\! =\! A_xx(k) + B_uu(k) + B_s^\mathrm{o}s(k) + B_yy(k) \label{eq:RNN_model_x}\\
&s(k) = \sigma(\tilde A_xx(k) + \tilde B_uu(k) + \tilde B_s^\mathrm{o}s(k) + \tilde B_yy(k)) \label{eq:RNN_model_s}\\
&y(k) = Cx(k) + Du(k) + D_s s(k) \label{eq:RNN_model_y}
\end{align}
\end{subequations}
Matrices $A_x\in\R^{n\times n}$, $B_u\in\R^{n\times m}$, $B_s^\mathrm{o}\in\R^{n\times \nu}$, $B_y\in\R^{n\times p}$, $\tilde A_x\in\R^{\nu\times n}$, $\tilde B_u\in\R^{\nu\times m}$, $\tilde B_s^\mathrm{o}\in\R^{\nu\times \nu}$, and $\tilde B_y\in\R^{\nu\times p}$ are treated as hyperparameters, which are selected before training. 
On the other hand, matrices $C\in\R^{n\times p}$, $D\in\R^{p\times m}$, and $D_s\in\R^{p\times \nu}$ are free trainable parameters. Moreover,  $\sigma(\cdot):\R^\nu\to\R^\nu$ is a decentralised vector of sigmoidal activation  functions applied element-wise, i.e., $\sigma(v)=\begin{bmatrix}\sigma^{(1)}(v^{(1)})& \dots& \sigma^{(\nu)}(v^{(\nu)})\end{bmatrix}^\top,$ where $\sigma^{(i)}(\cdot)$, for $i=1,\dots,\nu$, are hyperparameters that must be selected so as to fulfil the following assumption.
\begin{ass}\label{ass:sigmoid_function}
    Each component $\sigma^{(i)}:\R\rightarrow\R$, $i=1,\dots,\nu$, is a sigmoid function, i.e., a bounded, twice continuously differentiable function with positive first derivative at each point and one and only one inflection point in $\sigma^{(i)}(0)=0$. Also, $\sigma^{(i)}(\cdot)$ is Lipschitz continuous with unitary Lipschitz constant and
such that $\sigma^{(i)}(0)$, $\frac{\partial\sigma^{(i)}(v^{(i)})}{\partial v^{(i)}}\big|_{v^{(i)}=0}=1$ and $\sigma^{(i)}(v^{(i)})\in[-1,1]$, $\forall v^{(i)}\in\R$.
\end{ass}
\subsection{Hyperparameters definition}\label{sec:hyperparameters}
Similarly to the ESN training algorithm~\cite{jaeger2001echo}, the hyperparameters $n$ (i.e., the RNN order), $\nu$ (i.e., the number of entries of $s$), $\sigma^{(i)}$ (the nonlinearity type), for $i=1,\dots,\nu$, and the matrices in \eqref{eq:RNN_model_x}-\eqref{eq:RNN_model_s} are user-generated before training. In particular, matrices ($B_u$, $B_s^o$, $B_y$, $\tilde B_u$, $\tilde B_y$) can take random values. On the other hand, matrices ($A_x$, $\tilde A_x$, $\tilde B_s^\mathrm o$) must be defined according to the following proposition~\cite{revay2023recurrent}.
\begin{proposition}\label{prop:contractivity_untrained}
    Under Assumption~\ref{ass:sigmoid_function}, if there exist $\bar\alpha\in(0,1)$ and matrices $P_\mathrm o\in\mathbb S_+^{n}$ and $\Lambda_\mathrm o\in\mathbb D_+^\nu$, such that
\begin{equation}
\resizebox{\columnwidth}{!}{$
\begin{bmatrix}
    \bar{\alpha}^2 P_\mathrm{o} & -\tilde{A}_x^\top \Lambda_\mathrm{o} \\
    -\Lambda_\mathrm{o} \tilde{A}_x & 2\Lambda_\mathrm{o} - \Lambda_\mathrm{o}\tilde{B}_s^{\mathrm{o}} - \tilde{B}_s^{\mathrm{o}\,\top}\Lambda_\mathrm{o}
\end{bmatrix}
 - \begin{bmatrix}
            A_x^\top \\ {B_s^{\mathrm{o}}}^\top
        \end{bmatrix} P_\mathrm o \begin{bmatrix}
            A_x & B_s^{\mathrm{o}}
        \end{bmatrix} \succ0,$}
        \label{eq:RENcontractivity_Lyapunov}
    \end{equation}
     then, model~\eqref{eq:RNN_model} is well-posed and contracting with rate $\alpha<\bar{\alpha}$.\hfill{}$\square$
\end{proposition}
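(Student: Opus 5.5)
The plan is to follow the incremental quadratic-constraint argument used for recurrent equilibrium networks~\cite{revay2023recurrent}, applied to the structure of~\eqref{eq:RNN_model}. Throughout, in~\eqref{eq:RNN_model_x}--\eqref{eq:RNN_model_s} I treat $u(k)$ and $y(k)$ as exogenous signals, as in ESN teacher-forced training, where $y$ is the measured output. The reason is that $C,D,D_s$ do not appear in~\eqref{eq:RENcontractivity_Lyapunov}, so the statement can only concern the map driven by $(u,y)$. I will proceed in three steps.

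\emph{Step 1: a sector condition for the nonlinearity.} By Assumption~\ref{ass:sigmoid_function}, each $\sigma^{(i)}$ is monotone and $1$-Lipschitz. Hence for any $a,b\in\R$ there is $\gamma_i\in[0,1]$ with $\sigma^{(i)}(a)-\sigma^{(i)}(b)=\gamma_i(a-b)$. Writing $\Delta s=\Gamma\Delta w$ with $\Gamma=\diag(\gamma_1,\dots,\gamma_\nu)$, and using that $\Lambda_\mathrm o$ is diagonal positive, this yields the incremental inequality
\begin{equation*}
\Delta s^\top\Lambda_\mathrm o(\Delta w-\Delta s)\geq 0 .
\end{equation*}
This inequality is the only property of $\sigma$ the argument will use.

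\emph{Step 2: well-posedness.} The lower-right block of~\eqref{eq:RENcontractivity_Lyapunov} gives
\begin{equation*}
\Lambda_\mathrm o(I-\tilde B_s^\mathrm o)+(I-\tilde B_s^\mathrm o)^\top\Lambda_\mathrm o\succ {B_s^\mathrm o}^\top P_\mathrm o B_s^\mathrm o\succeq0 .
\end{equation*}
\emph{Uniqueness.} Suppose $s_1,s_2$ both solve~\eqref{eq:RNN_model_s} for the same $(x,u,y)$. Then $\Delta w=\tilde B_s^\mathrm o\Delta s$, and the sector inequality gives $\Delta s^\top\Lambda_\mathrm o(\tilde B_s^\mathrm o-I)\Delta s\ge0$. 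This contradicts the strict inequality above unless $\Delta s=0$.

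\emph{Existence.} This is the step I expect to be the main obstacle. My first attempt is to invoke the equilibrium-network result of~\cite{revay2023recurrent}: the condition above makes $I-\tilde B_s^\mathrm o$ strongly monotone in the $\Lambda_\mathrm o$-weighted inner product. Existence then follows from monotone-operator splitting, since $\sigma$ is the resolvent of a maximal monotone operator. If that citation is not acceptable, the fallback is to show that $s\mapsto s-\sigma(v+\tilde B_s^\mathrm o s)$ is a homeomorphism. Its Jacobian $I-\Gamma\tilde B_s^\mathrm o$ is nonsingular for every $\Gamma\in[0,I]$, by the same monotonicity argument, and the map is proper because $\sigma$ is bounded. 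Hadamard's global inverse theorem would then give existence.

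\emph{Step 3: contraction.} Take two trajectories with the same $u,y$ and different initial states. Then $\Delta x(k+1)=A_x\Delta x+B_s^\mathrm o\Delta s$ and $\Delta w=\tilde A_x\Delta x+\tilde B_s^\mathrm o\Delta s$. Since~\eqref{eq:RENcontractivity_Lyapunov} is strict, it still holds with $\bar\alpha$ replaced by some $\alpha<\bar\alpha$. I then pre- and post-multiply that inequality by $[\Delta x^\top\ \Delta s^\top]$ and its transpose. This gives
\begin{equation*}
\alpha^2\Delta x^\top P_\mathrm o\Delta x-\Delta x(k{+}1)^\top P_\mathrm o\Delta x(k{+}1)
>-2\Delta s^\top\Lambda_\mathrm o(\Delta w-\Delta s)\ \ \text{(whenever } (\Delta x,\Delta s)\neq 0\text{)} .
\end{equation*}
Let me verify the cross terms. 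They are $-2\Delta s^\top\Lambda_\mathrm o\tilde A_x\Delta x+2\Delta s^\top\Lambda_\mathrm o\Delta s-2\Delta s^\top\Lambda_\mathrm o\tilde B_s^\mathrm o\Delta s$, which equals $-2\Delta s^\top\Lambda_\mathrm o(\Delta w-\Delta s)$ as claimed. By Step 1 the right-hand side is $\le0$. Hence $V=\Delta x^\top P_\mathrm o\Delta x$ satisfies $V(k+1)\le\alpha^2V(k)$.

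Iterating and using $\lambda_{\min}(P_\mathrm o)\norm{\Delta x}^2\le V\le\lambda_{\max}(P_\mathrm o)\norm{\Delta x}^2$ gives Definition~\ref{def:contractivity} with rate $\alpha<\bar\alpha$ and $\rho=\sqrt{\lambda_{\max}(P_\mathrm o)/\lambda_{\min}(P_\mathrm o)}$.
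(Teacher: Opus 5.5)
Your route (the incremental sector inequality for slope-restricted $\sigma$, well-posedness from the $(2,2)$ block, and the incremental Lyapunov function $V=\Delta x^\top P_\mathrm o\Delta x$) is the standard REN argument of \cite{revay2023recurrent}. The paper relies on exactly that: it states Proposition~\ref{prop:contractivity_untrained} with a citation and gives no proof of its own. Steps~1 and~2 are correct.

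Step~3, however, contains a sign error that breaks the final implication as written. Let $z=[\Delta x^\top\ \Delta s^\top]^\top$ and let $M(\alpha)$ be the left-hand side of \eqref{eq:RENcontractivity_Lyapunov} with $\bar\alpha$ replaced by $\alpha$. Your own cross-term computation gives
\[
z^\top M(\alpha)z=\alpha^2V(k)-V(k+1)-2\Delta s^\top\Lambda_\mathrm o(\Delta w-\Delta s).
\]
So $z^\top M(\alpha)z>0$ yields $\alpha^2V(k)-V(k+1)>2\Delta s^\top\Lambda_\mathrm o(\Delta w-\Delta s)$, not $>-2\Delta s^\top\Lambda_\mathrm o(\Delta w-\Delta s)$. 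With your sign, you have a lower bound by a quantity that is $\le 0$. That says nothing about the sign of $\alpha^2V(k)-V(k+1)$, so ``the right-hand side is $\le 0$, hence $V(k+1)\le\alpha^2V(k)$'' is a non sequitur. With the correct sign, Step~1 makes the right-hand side $\ge 0$ and the conclusion follows. It holds strictly when $\Delta x(k)\neq0$; when $\Delta x(k)=0$, uniqueness gives $\Delta s=0$ and $V(k+1)=0$.

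A minor simplification for Step~2: existence needs neither monotone-operator splitting nor Hadamard's theorem. Since $\sigma$ takes values in $[-1,1]^\nu$, the map $s\mapsto\sigma(\tilde A_xx+\tilde B_uu+\tilde B_s^\mathrm os+\tilde B_yy)$ is a continuous self-map of the compact convex set $[-1,1]^\nu$. Brouwer's theorem then gives a solution, and your uniqueness argument completes well-posedness. Your Hadamard fallback is also valid. Nonsingularity of $I-\Gamma\tilde B_s^\mathrm o$ is the same fact from \cite{ravasio2025developmentvelocityformclass} that the paper uses in proving Proposition~\ref{prop:contarctivity_implies_diss}, and properness follows from the boundedness of $\sigma$.
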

Note that, by applying the Schur complement to~\eqref{eq:RENcontractivity_Lyapunov} and by substituting $Z_x=P_\mathrm oA_x$, $\tilde Z_x=\Lambda_\mathrm o \tilde A_x$, and $\tilde Z_s=\Lambda_\mathrm o \tilde B_s^\mathrm o$, condition~\eqref{eq:RENcontractivity_Lyapunov} is equivalent to
\begin{equation}\label{eq:contractivity_untrained_LMI}
        \begin{bmatrix}
            \bar{\alpha}^2P_\mathrm o & -\tilde{Z}_x^\top & Z_x^\top \\ -\tilde{Z}_x& 2\Lambda_\mathrm o-\tilde Z_s-\tilde Z_s^\top & {B_s^\mathrm o}^\top P_\mathrm o\\
            Z_x & P_\mathrm oB_s^\mathrm o & P_\mathrm o
        \end{bmatrix} \succ 0,
    \end{equation}
Therefore, contractivity and well-posedness of the untrained model~\eqref{eq:RNN_model} can be guaranteed by solving the LMI~\eqref{eq:contractivity_untrained_LMI} with decision variables $P_\mathrm o\in\mathbb S_+^{n}$, $\Lambda_\mathrm o\in\mathbb D_+^{\nu}$, $Z_x\in\R^{n\times n}$, $\tilde Z_x\in\R^{\nu\times n}$, and $\tilde Z_{s}\in\R^{\nu\times \nu}$, and then setting $A_x=P_\mathrm o^{-1}Z_x$, $\tilde A_x=\Lambda_\mathrm o^{-1}\tilde Z_x$, and $\tilde B _s^\mathrm o=\Lambda_\mathrm o^{-1}\tilde Z_s$. \\
Note that the contractivity property plays the same role as the echo state property in~\cite{jaeger2001echo}. In particular, this property is important during training, as it guarantees that the state trajectories of~\eqref{eq:RNN_model} asymptotically depend only on the driving input signals $(u,y)$, while the effect of the initial conditions vanishes asymptotically over time. Thanks to this property, we can leverage the linear-in-the-parameters structure of~\eqref{eq:RNN_model_y} with respect to the free parameters to formulate the learning problem as a convex optimisation problem.\smallskip\\
%
A general drawback of the proposed approach is that fixing the matrices in \eqref{eq:RNN_model_x}-\eqref{eq:RNN_model_s} a priori reduces the number of free parameters, which may lead to lower performance compared with conventional gradient-based training methods. Note, however, that this limitation can be mitigated by resorting to methods for informed hyperparameter selection, along the lines of~\cite{sgadari2026}.
\subsection{The trained model}
In this work, model~\eqref{eq:RNN_model} is referred to as the untrained model. After the training of the free parameters $C$, $D$, and $D_s$ (see Section \ref{sec:training_alg} for details), one can use the output relation~\eqref{eq:RNN_model_y} into \eqref{eq:RNN_model_x}-\eqref{eq:RNN_model_s} and define $A=A_x+B_yC$, $B=B_u+B_yD$,  $B_s=B_s^\mathrm{o}+B_yD_s$, $\tilde A=\tilde A_x+\tilde B_yC$, $\tilde B=\tilde B_u+\tilde B_yD$, and $\tilde B_s=\tilde B_s^\mathrm{o}+\tilde B_yD_s$. In this way, we can rewrite~\eqref{eq:RNN_model} as
\begin{subequations}\label{eq:RNN_model_ss}
\begin{align}
    &x(k+1)=Ax(k)+Bu(k)+B_ ss(k)\\
    &s(k)=\sigma(\tilde Ax(k)+\tilde Bu(k)+\tilde B_ ss(k))\label{eq:RNN_model_ss_s}\\
    &y(k)=Cx(k)+Du(k)+D_ss(k)
\end{align}
\end{subequations}
Note that, when $\tilde B_s$ has a lower triangular structure, the components of $s(k)$ can be computed explicitly, row by row, from~\eqref{eq:RNN_model_ss_s}. Therefore, model~\eqref{eq:RNN_model_ss} is well-posed by construction, i.e., equation~\eqref{eq:RNN_model_ss_s} admits a unique solution $s(k)$ for any given pair $(x(k),u(k))$.  In the general case in which $\tilde B_s$ is full, as discussed in~\cite{revay2023recurrent,ravasio2025developmentvelocityformclass}, a sufficient condition for model~\eqref{eq:RNN_model_ss} to be also well-posed is the existence of a matrix $\Lambda\in\mathbb D_+^\nu$ such that \begin{equation}\label{eq:well_posedness_ss}
2\Lambda-\Lambda \tilde B_s-\tilde B_s\Lambda^\top\succ0.
\end{equation}
Notably, \eqref{eq:well_posedness_ss} will be imposed, through the inclusion of suitable matrix inequalities, in the training procedure.
%
%
\section{Learning unstructured models}\label{sec:training_alg}
As discussed, the approach proposed in this section is inspired by the one proposed in \cite{jaeger2001echo} for ESNs. In particular, considering model \eqref{eq:RNN_model}, the hyperparameters are previously defined and only matrices $C$, $D$, and $D_s$ are free training parameters. For notational reasons, we define $\theta = \begin{bmatrix}C&D&D_s\end{bmatrix}\in\R^{p\times r}$.\\
The advantage of this approach is that of reducing the training problem to a convex and computationally lightweight one, which can be optimally solved using standard convex optimisation techniques. 
In this section we present two alternative procedures for training unstructured models.\\
First, the algorithm described in Section \ref{subsec:LS} is based on the formulation of the learning problem as a least-squares minimisation. However, in real-world applications, the output data are often subject to uncertainty, which can affect the accuracy of the estimated parameters. Specifically, when noise is present, the least-squares approach may no longer provide correct and consistent estimates due to the possible correlation of the noise in the output equation. As a result, the least-squares estimator $\theta_\mathrm{LS}$ can become biased, potentially degrading the resulting model performance. For this reason, inspired by \cite{d2025data,sgadari2026}, in Section \ref{subsec:SM} we provide an alternative approach based on set-membership which, among other things, does not need any assumption on the noise probability distribution. 
\subsection{Least-squares learning procedure}
\label{subsec:LS}
The identification of $\theta$ can be performed according to Algorithm~\ref{alg:LS_learning}.\\
\begin{algorithm}
\small
\caption{Least-squares learning}
\label{alg:LS_learning}
\begin{algorithmic}[1]
\State Simulate \eqref{eq:RNN_model_x}--\eqref{eq:RNN_model_s} from a random initial condition $x(0)$, using the dataset input sequence $\mathcal U_\mathrm d$ and output sequence $\mathcal Y_\mathrm d$.  
This yields the trajectories $\{x_\mathrm d(k)\}_{k=1}^{N_\mathrm d}$ and $\{s_\mathrm d(k)\}_{k=1}^{N_\mathrm d}$, which represent the evolution of the variables $x(k)$ and $s(k)$, respectively.

\State Define
\[
\begin{array}{cl}
\Phi \coloneq &
\begin{bmatrix}
x_\mathrm d(\tau_\mathrm w+1)^\top &
u_\mathrm d(\tau_\mathrm w+1)^\top &
s_\mathrm d(\tau_\mathrm w+1)^\top\\
\vdots & \vdots & \vdots\\
x_\mathrm d(N_\mathrm d)^\top &
u_\mathrm d(N_\mathrm d)^\top &
s_\mathrm d(N_\mathrm d)^\top
\end{bmatrix}, \\[4mm]
Y_\mathrm d \coloneq &
\begin{bmatrix}
y_\mathrm d(\tau_\mathrm w+1)^\top\\
\vdots\\
y_\mathrm d(N_\mathrm d)^\top
\end{bmatrix},
\end{array}
\]
where the washout period $\tau_\mathrm w \in \mathbb{Z}_+$ is an additional hyperparameter introduced to accommodate the initial model transient due to the random initialisation.

\State Compute the solution $\theta^\star$ to the least-squares problem
\begin{equation}\label{opt:least_squares_identification}
\theta^\star=\arg\min_\theta J(\theta),\,J(\theta)
=
\frac{1}{N_\mathrm d-\tau_\mathrm w}
\left\| Y_\mathrm d - \Phi \theta^\top \right\|^2 .
\end{equation}
\end{algorithmic}
\end{algorithm}
Note that the procedure described in Algorithm~1 does not guarantee, in general, the well-posedness of the model~\eqref{eq:RNN_model_ss}. In principle, this property could be enforced by minimising $J(\theta)$ while, at the same time, imposing the well-posedness condition~\eqref{eq:well_posedness_ss}. However, this results in a bilinear optimisation problem, which can be computationally intensive and prone to numerical issues.\\
To obtain a tractable formulation of this problem, we introduce the following proposition.
\begin{proposition}\label{prop:identification_lmi}
The optimisation problem 
\begin{subequations}\label{opt:identification_lmi}
\begin{align}
    &\min_{
        s\in \mathbb{R}_+,H\in\R^{p\times r}
    }
    s,\notag\\
    &\text{subject to: } \notag\\
    &\begin{bmatrix}\label{eq:diss_opt_cost_lmi}
        s &  H - (Q_\mathrm d^{-1}\Phi^\top Y_\mathrm d)^\top\tilde Q\\
        H^\top - \tilde Q Q_\mathrm d^{-1}\Phi^\top Y_\mathrm d & \tilde Q
    \end{bmatrix}\succeq 0
\end{align}
\end{subequations}
where $Q_\mathrm d=\Phi^\top\Phi$ and $\theta^\star=H\tilde Q^{-1}$, is equivalent to~\eqref{opt:least_squares_identification}, if, for any $\gamma\in\R_+$,
\begin{equation}\label{eq:lmi_training_condition}
    \tilde Q=\gamma Q_\mathrm d.
\end{equation}
Moreover, define matrices $H_{\mathrm e}\in\R^{p\times n+m}$ and $H_s\in\R^{p\times \nu}$ such that $H=\left[H_{\mathrm e},\,H_s\right]$. If there exist matrices $\tilde  Q_{\mathrm e} \in \mathbb{S}_+^{n+m}$ and $Q_{s} \in \mathbb{D}_+^\nu$ such that $\tilde  Q=\diag(\tilde  Q_{\mathrm e},Q_{s})$, and the condition
\begin{equation}\label{eq:well_posedness_lmi}
2Q_{s}-\tilde B_s^\mathrm oQ_{s}-\tilde B_yH_s-Q_{s}\tilde{B}_s^{\mathrm{o}\,\top}-H_s^\top\tilde B_y^\top\succ0,
\end{equation}
holds, then model~\eqref{eq:RNN_model_ss} is well-posed.
\hfill{}$\square$
\end{proposition}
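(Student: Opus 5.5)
The plan has two independent parts: rewrite the least-squares cost as a quadratic form centred at the unconstrained minimiser and read \eqref{eq:diss_opt_cost_lmi} as its Schur complement; then show that \eqref{eq:well_posedness_lmi} is a congruence transformation of \eqref{eq:well_posedness_ss}.

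\emph{Equivalence with \eqref{opt:least_squares_identification}.} I assume the dataset is informative (Assumption~\ref{ass:dataset}), so that $Q_\mathrm d=\Phi^\top\Phi\succ0$. I also read the scalar $s$ in the top-left block of \eqref{eq:diss_opt_cost_lmi} as $sI_p$.

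Set $\theta_\mathrm{LS}\coloneq(Q_\mathrm d^{-1}\Phi^\top Y_\mathrm d)^\top$. Expanding the Frobenius norm gives
\begin{equation*}
(N_\mathrm d-\tau_\mathrm w)J(\theta)=\mathrm{tr}\big((\theta-\theta_\mathrm{LS})Q_\mathrm d(\theta-\theta_\mathrm{LS})^\top\big)+c,
\end{equation*}
where $c$ does not depend on $\theta$. Hence $\theta_\mathrm{LS}$ is the unique minimiser of \eqref{opt:least_squares_identification}.

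Next, substitute $H=\theta\tilde Q$, which is a bijection since $\tilde Q=\gamma Q_\mathrm d\succ0$. The off-diagonal block of \eqref{eq:diss_opt_cost_lmi} becomes $(\theta-\theta_\mathrm{LS})\tilde Q$. Taking the Schur complement with respect to $\tilde Q\succ0$, the LMI is equivalent to
\begin{equation*}
sI_p\succeq(\theta-\theta_\mathrm{LS})\tilde Q\tilde Q^{-1}\tilde Q(\theta-\theta_\mathrm{LS})^\top=\gamma(\theta-\theta_\mathrm{LS})Q_\mathrm d(\theta-\theta_\mathrm{LS})^\top.
\end{equation*}
The right-hand side is positive semidefinite and vanishes exactly when $\theta=\theta_\mathrm{LS}$, because $Q_\mathrm d\succ0$. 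Therefore the optimal value of \eqref{opt:identification_lmi} is $s=0$, attained only at $H=\theta_\mathrm{LS}\tilde Q$. This gives $\theta^\star=H\tilde Q^{-1}=\theta_\mathrm{LS}$, which coincides with the solution of \eqref{opt:least_squares_identification} for every $\gamma>0$.

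\emph{Well-posedness.} With $\tilde Q=\diag(\tilde Q_\mathrm e,Q_s)$ block diagonal, $\theta=H\tilde Q^{-1}$ splits columnwise, so $D_s=H_sQ_s^{-1}$, i.e.\ $H_s=D_sQ_s$. Then
\begin{equation*}
\tilde B_s^\mathrm oQ_s+\tilde B_yH_s=(\tilde B_s^\mathrm o+\tilde B_yD_s)Q_s=\tilde B_sQ_s,
\end{equation*}
and \eqref{eq:well_posedness_lmi} reads $2Q_s-\tilde B_sQ_s-Q_s\tilde B_s^\top\succ0$. Pre- and post-multiplying by $\Lambda\coloneq Q_s^{-1}\in\mathbb D_+^\nu$ is a congruence, so strict definiteness is preserved. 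It yields $2\Lambda-\Lambda\tilde B_s-\tilde B_s^\top\Lambda\succ0$, which is \eqref{eq:well_posedness_ss} since $\Lambda$ is diagonal and hence symmetric. Well-posedness of \eqref{eq:RNN_model_ss} then follows from the sufficient condition recalled after \eqref{eq:well_posedness_ss} \cite{revay2023recurrent}.

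\emph{Expected obstacle.} The main delicate point is interpretive rather than technical. The first claim needs $Q_\mathrm d$ invertible and the top-left block read as $sI_p$. The second claim needs the diagonal structure of $Q_s$, which is what makes $\Lambda=Q_s^{-1}$ diagonal. Note that imposing $\tilde Q=\diag(\tilde Q_\mathrm e,Q_s)$ is in general incompatible with $\tilde Q=\gamma Q_\mathrm d$, so the two statements must be kept separate: equivalence holds under \eqref{eq:lmi_training_condition}, while well-posedness holds under the block-diagonal parametrisation.
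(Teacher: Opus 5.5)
Your proposal is correct and follows essentially the same route as the paper: complete the square in the least-squares cost, change variables via $H=\theta\tilde Q$ and read \eqref{eq:diss_opt_cost_lmi} as a Schur complement, then use $H_s=D_sQ_s$ and the congruence by $\Lambda=Q_s^{-1}$ to recover \eqref{eq:well_posedness_ss}. The differences are cosmetic: you complete the square in $\theta$ before substituting, whereas the paper substitutes first and completes the square in $H$, and you make explicit the points the paper leaves implicit ($Q_\mathrm d\succ0$, reading $s$ as $sI_p$, and the optimal value being $s=0$).
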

The proof of Proposition \ref{prop:identification_lmi} is provided in the Appendix. This result allows us to formulate the learning problem of well-posed models as an LMI one. 
However, note that the equality condition~\eqref{eq:lmi_training_condition} is optimal for training, but it cannot be verified in general, particularly under condition~\eqref{eq:well_posedness_lmi}. Inspired by~\cite{d2023virtual}, we relax~\eqref{eq:lmi_training_condition} by considering $\tilde Q_{\mathrm e}$ and $Q_{s}$ as optimisation variables and replacing~\eqref{eq:lmi_training_condition} with the following inequalities
\begin{equation}\label{eq:lmi_training_condition_relaxed}
\begin{aligned}
    &\tilde Q-\frac{1}{N_\mathrm d-\tau_\mathrm w}Q_\mathrm d+\lambda I_r\succeq0\\
    &-\tilde Q+\frac{1}{N_\mathrm d-\tau_\mathrm w}Q_\mathrm d+\lambda I_r\succeq0
\end{aligned}
\end{equation}
where $\lambda \in \mathbb{R}_+$ is a further optimisation variable to be minimised.\\
Based on these considerations, we can now replace  \emph{Step~3} of Algorithm 1 with Algorithm \ref{alg:LS_identification}. 
 Specifically, we compute $\theta^\star$ as
\[\theta^\star=\text{WellPosed\_LS}(Y_\mathrm d,\Phi,r,\nu,p,B_s^\mathrm o,\tau_\mathrm w,\beta),\]
where $\beta\in\R_+$ is a design parameter.\\
%
%
\begin{algorithm}
\small
\caption{WellPosed\_LS}
\label{alg:LS_identification}
\begin{algorithmic}[1]
\Require $Y_\mathrm d$, $\Phi$, $r$, $\nu$, $p$, $B_s^\mathrm o$, $\tau_\mathrm w$, $\beta$
\Ensure $\theta^\star$
\State Solve
\begin{equation}\label{opt:identification_well_posedness_lmi}
\begin{aligned}
    &\min_{\substack{
        s,\lambda\in \mathbb{R}_+,\,\;
        \tilde Q_{\mathrm e} \in \mathbb{S}_+^{r-\nu},\,
        Q_{s} \in \mathbb{D}_+^\nu,\\
        H_{\mathrm e}\in\mathbb{R}^{p\times (r-\nu)},\,
        H_s\in\mathbb{R}^{p\times \nu}
    }}
    s+\beta \lambda\\
    &\text{subject to:}\\
    &\eqref{eq:diss_opt_cost_lmi},\;
      \eqref{eq:well_posedness_lmi},\;
      \eqref{eq:lmi_training_condition_relaxed}.
\end{aligned}
\end{equation}
\State Set
\(
\theta^\star = H \tilde Q^{-1}\), where
$H = [\,H_{\mathrm e},\, H_s\,]$ and
$\tilde Q = \operatorname{diag}(\tilde Q_{\mathrm e}, Q_{s})$.
\end{algorithmic}
\end{algorithm}
%
%
\subsection{The set membership approach}
\label{subsec:SM}
The algorithm proposed in this section addresses the case in which biased and inaccurate parameter estimation may result when measurements are affected by additive bounded noise, i.e.
\begin{equation}\label{eq:output_rel_noise}
y_\mathrm d(k) = y(k) + \eta(k).
\end{equation}
where $\eta(k)$ verifies the following.
\begin{ass}\label{ass:noise_boundedness_SM}
    The noise satisfies $|\eta(k)|\leq\bar \eta$ for all $k\in\mathbb Z_+$, where $\bar \eta\in\R_+^{p}$ is known. 
\end{ass}
The main idea behind this approach, previously explored in~\cite{d2025data,sgadari2026} is to structure the training into two steps. First we compute the set, referred to as the feasible parameter set (FPS), of all model parameterisations consistent with the available data. Then we extract from this set the parameter value that satisfies the desired properties and achieves the best performance on a validation dataset according to a chosen suitability index.\\ Note also that, since our goal is to learn control-oriented models, the resulting model uncertainty bound provides valuable information that can be employed within robust control schemes, along the lines of~\cite{d2025data}, to account for possible model uncertainty.\\ 
In this case we assume also that a validation dataset is available, which satisfies the following.
\begin{ass}\label{ass:dataset_val}
    A validation dataset independent from the one used for training is available. This dataset consists of an input sequence $\mathcal U_\mathrm v=\{u_\mathrm v(k)\}_{k=1}^{N_\mathrm v}$ and a measured output sequence $\mathcal Y_\mathrm v=\{y_\mathrm v(k)\}_{k=1}^{N_\mathrm v}$.
\end{ass}
\subsubsection{Definition of the FPS}
To define the FPS, we introduce the vectors $\phi_\mathrm d(k) \coloneq [x_\mathrm d(k)^\top,\; u_\mathrm d(k)^\top,\;s_\mathrm d(k)]^\top$ and $\phi(k) \coloneq [x(k)^\top,\; u(k)^\top,\;s(k)^\top]^\top$, where $x_\mathrm d(k)$ and $s_\mathrm d(k)$ (respectively, $x(k)$ and $s(k)$) are obtained by simulating model~\eqref{eq:RNN_model} with the dataset sequences $(\mathcal{U}_\mathrm d,\mathcal{Y}_\mathrm d)$ (respectively, the ideal sequences $\{u(k),y(k)\}_{k=0}^{N_{\mathrm d}}$). Since $u(k)=u_\mathrm d(k)$ for all $k$, \eqref{eq:output_rel_noise} can be rewritten as
\[y_\mathrm d^{(i)}(k)=\theta^{(i)}\phi_\mathrm d(k)+\epsilon_i(k)+\eta^{(i)}(k),\]
for all $i=1,\dots, p$,
where $\epsilon_i(k) = -\theta^{(i)} w_\mathrm d(k)$ accounts for the effect of the measurement noise on the state predictions, and
\[w_\mathrm d(k)=\phi_\mathrm d(k)-\phi(k)=\begin{bmatrix}
    x_\mathrm d(k)-x(k)\\0\\s_\mathrm d(k)-s(k)
\end{bmatrix}\in\R^r.\]\\
The following proposition can be proved.
\begin{proposition}\label{prop:contarctivity_implies_diss}
Under Assumptions~\ref{ass:sigmoid_function} and \ref{ass:noise_boundedness_SM}, if the RNN~\eqref{eq:RNN_model} satisfies condition~\eqref{eq:RENcontractivity_Lyapunov}, then:
\begin{itemize}
    \item[(i)] model~\eqref{eq:RNN_model} is $\delta$ISS;
    \item[(ii)] there exist functions $\beta\in\mathcal{KL}$ and $\gamma\in\mathcal{K}_\infty$ such that for any $k\in\mathbb Z_+$, it holds that
    \[\norm{x_\mathrm d(k)-x(k)}\leq\bar w_x(k)\coloneq\beta(\norm{x_\mathrm d(0)-x(0)},k)+\gamma(\bar \eta);\]
    \item[(iii)] for any $k\in\mathbb Z_+$, there exist $\Sigma(k)\in\mathbb D_+^{\nu}$ where $\Sigma(k)\preceq I_\nu$ such that $\norm{s_\mathrm d(k)-s(k)}\leq \bar w_s(k)$, where \begin{multline*}\bar w_s(k)\coloneq\norm{(I_\nu-\Sigma(k)\tilde B_s^\mathrm o)^{-1}\Sigma(k)\tilde A_x}\bar w_x(k)\\+\norm{(I_\nu-\Sigma(k)\tilde B_s^\mathrm o)^{-1}\Sigma(k)\tilde B_y}\bar\eta.\end{multline*}
\end{itemize}
\end{proposition}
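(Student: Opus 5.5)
The plan is to treat the untrained model~\eqref{eq:RNN_model} as a system with state $x$ and exogenous input $(u,y)$, and to build a quadratic incremental Lyapunov function from the certificate in Proposition~\ref{prop:contractivity_untrained}. Consider two trajectories $(x_a,s_a)$ and $(x_b,s_b)$ driven by $(u_a,y_a)$ and $(u_b,y_b)$. Set $\Delta x=x_a-x_b$, $\Delta s=s_a-s_b$ and $\Delta v=\tilde A_x\Delta x+\tilde B_u\Delta u+\tilde B_s^\mathrm o\Delta s+\tilde B_y\Delta y$.

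\textbf{Step 1: a sector bound for the activation.} By Assumption~\ref{ass:sigmoid_function} and the mean value theorem, each component satisfies $\Delta s^{(i)}=\sigma_i'(\xi_i)\Delta v^{(i)}$ with $\sigma_i'(\xi_i)\in(0,1]$. This gives the incremental sector condition $\Delta s^\top\Lambda_\mathrm o(\Delta v-\Delta s)\geq 0$ for any $\Lambda_\mathrm o\in\mathbb D_+^\nu$.

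\textbf{Step 2: point (i).} Take $V=\Delta x^\top P_\mathrm o\Delta x$. Multiply \eqref{eq:RENcontractivity_Lyapunov} on both sides by $[\Delta x^\top,\Delta s^\top]$ and add twice the nonnegative sector term. With zero input increments, this yields $V(k+1)\le\bar\alpha^2V(k)-\epsilon\norm{(\Delta x,\Delta s)}^2$, which is exactly how contractivity is obtained in~\cite{revay2023recurrent}. With nonzero increments $(\Delta u,\Delta y)$ there are additional cross terms, linear in $(\Delta u,\Delta y)$, both in $\Delta x(k+1)$ (through $B_u,B_y$) and in the sector term (through $\tilde B_u,\tilde B_y$). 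The strict inequality in \eqref{eq:RENcontractivity_Lyapunov} leaves a margin $\epsilon>0$. Using Young's inequality, I will absorb these cross terms into that margin plus a term $c\norm{(\Delta u,\Delta y)}^2$. The result is $V(k+1)-V(k)\le-(1-\bar\alpha^2)\lambda_{\min}(P_\mathrm o)\norm{\Delta x}^2+c\norm{(\Delta u,\Delta y)}^2$, so $V$ is a dissipation-form $\delta$ISS Lyapunov function with $\mathcal X=\R^n$, which is trivially robust positively invariant. Theorem~\ref{th:local_dISS} then gives (i). The main obstacle is this bookkeeping: I must check that the strict margin controls the $\Delta s$-cross terms uniformly. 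It does, because the quadratic form is negative definite jointly in $(\Delta x,\Delta s)$, and completing the square in $(\Delta x,\Delta s)$ against $(\Delta u,\Delta y)$ gives a finite constant $c$.

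\textbf{Step 3: point (ii).} Apply (i) with trajectory $a$ driven by $(u_\mathrm d,y_\mathrm d)$ and trajectory $b$ by $(u,y)$. Here $\Delta u=0$ and $\Delta y=\eta$, so Assumption~\ref{ass:noise_boundedness_SM} gives $\max_h\norm{\Delta y(h)}\le\norm{\bar\eta}$. Substituting into Definition~\ref{def:delta_ISS} gives the claimed bound $\bar w_x(k)$, with $\gamma(\bar\eta)$ read as $\gamma(\norm{\bar\eta})$.

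\textbf{Step 4: point (iii).} Use the componentwise mean value relation from Step 1: $\Delta s=\Sigma(k)\Delta v$ with $\Sigma(k)=\diag(\sigma_i'(\xi_i))\in\mathbb D_+^\nu$ and $\Sigma(k)\preceq I_\nu$. Since $\Delta u=0$, this reads $(I_\nu-\Sigma\tilde B_s^\mathrm o)\Delta s=\Sigma(\tilde A_x\Delta x+\tilde B_y\eta)$.

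\textbf{Step 5: invertibility of $I_\nu-\Sigma\tilde B_s^\mathrm o$.} The lower-right block of \eqref{eq:RENcontractivity_Lyapunov} implies $2\Lambda_\mathrm o-\Lambda_\mathrm o\tilde B_s^\mathrm o-\tilde B_s^{\mathrm o\top}\Lambda_\mathrm o\succ0$. Suppose $(I-\Sigma\tilde B_s^\mathrm o)z=0$ with $z\ne0$, and let $w=\tilde B_s^\mathrm o z$, so $z=\Sigma w$. Since $\Sigma\preceq I$ is diagonal with positive entries, $z^\top\Lambda_\mathrm o(w-z)=\sum_i\lambda_i\sigma_i(1-\sigma_i)w_i^2\ge0$. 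But this contradicts $z^\top\Lambda_\mathrm o(\tilde B_s^\mathrm o z-z)<0$, which follows from the block inequality. Hence the inverse exists. Solving for $\Delta s$, applying the triangle inequality and submultiplicativity, and using (ii) gives $\bar w_s(k)$.
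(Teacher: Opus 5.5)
Your proposal is correct and follows essentially the paper's route. The paper proves (i) by invoking its LMI lemma (Lemma~\ref{th:dISS_condition}) with $P=P_\mathrm o$, $\Lambda=\Lambda_\mathrm o$, $Q_x=(1-\bar\alpha^2)P_\mathrm o$ and a sufficiently large $Q_{\tilde u}$; this is a Schur-complement version of your Young's-inequality absorption. It then obtains (ii) and (iii) exactly as you do, via the $\delta$ISS bound with input increment $(0,\eta)$ and the mean-value matrix $\Sigma(k)$. The only difference is that you derive the sector bound and the invertibility of $I_\nu-\Sigma(k)\tilde B_s^\mathrm o$ directly, where the paper cites external lemmas, and your invertibility argument is correct.
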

The proof of Proposition~\ref{prop:contarctivity_implies_diss} has been postponed to the appendix for clarity reasons. 
In the light of it, if the model hyperparameters are chosen as described in Section~\ref{sec:hyperparameters}, then  $|\epsilon_i(k)|\leq\bar\epsilon_i\coloneq \norm{C^{(i)}}\bar w_x(k)+ \norm{D_s^{(i)}}\bar w_s(k)$ for all $k\in\mathbb Z_+$ and for all $i=1,\dots,p$.\\
Exploiting the boundedness of $\epsilon_i$, the FPS $\Theta$ can be defined following the procedure outlined in Algorithm~~\ref{alg:FPS_computation} \cite{d2025data}.\\
\begin{algorithm}[h!]
\small
\caption{FPS\_computation}
\label{alg:FPS_computation}
\begin{algorithmic}[1]
\Require $\{y_\mathrm d(k)\}_{k=\tau_\mathrm w+1}^{N_\mathrm d}$, $\{\phi_\mathrm d(k)\}_{k=\tau_\mathrm w+1}^{N_\mathrm d}$, $r$, $p$, $\tau_\mathrm w$, $\bar{\eta}$
\Ensure $\Theta$
\For{$i = 1,\dots,p$}
    \State Solve the optimisation problem
    \begin{align*}
        \underline{\lambda}_i
        =\;&\min_{\lambda \in \mathbb{R}_{\ge 0},\, K \in \mathbb{R}^{r}} \;\lambda\\[2mm]
        &\text{subject to}\\
        &\bigl|y^{(i)}_{\mathrm d}(k) - K^\top \phi_\mathrm d(k)\bigr|
        \;\le\; \lambda + \bar{\eta}^{(i)},\\
        &\qquad \forall\, k \in \{\tau_\mathrm w+1,\dots,N_\mathrm d\}.
    \end{align*}
    \State Compute
    \begin{equation}\label{eq:SM_bound_estimate}
        \hat{\bar{\epsilon}}_i = \alpha_i \underline{\lambda}_i,
        \quad \alpha_i > 1.
    \end{equation}
\EndFor
\State Construct the FPS
\(
\Theta = \Theta_1(\alpha_1) \times \dots \times \Theta_p(\alpha_p),
\)
where, for all $k=\tau_\mathrm w+1,\dots,N_\mathrm d$,
\[
\Theta_i(\alpha_i)
=
\left\{
\theta^{(i)} \in \mathbb{R}^{1\times r}
\,:\,
\bigl|y_\mathrm d^{(i)}(k)
-
\theta^{(i)} \phi_\mathrm d(k)\bigr|
\le
\hat{\bar{\epsilon}}_i + \bar{\eta}^{(i)}
\right\}.
\]
\end{algorithmic}
\end{algorithm}
The parameter $\alpha_i$ in \eqref{eq:SM_bound_estimate} accounts for the uncertainty arising from the finite number of measurements in the dataset, and satisfies $\alpha_i \to 1^+$ as $N_\mathrm{d}$ increases. 
A practical way to define $\alpha_i$ is to set this parameter, for all $i=1,\dots,p$, at the minimum value such that the least square model parametrisation $\theta_{\mathrm{LS}}$, lies within $\Theta$. In particular, we set, for all $i=1,\dots,p$ \cite{sgadari2026},
\begin{equation}\label{eq:SM_alpha_i}
    \alpha_i=\max_{k\in\{\tau_\mathrm w+1,\dots,N_\mathrm d\}}\cfrac{|y^{(i)}(k)-{\theta_{\mathrm{LS}}^{(i)}}\phi_\mathrm d(k)|-\bar\eta^{(i)}}{\underline{\lambda}_i}.
\end{equation}
If, however, \eqref{eq:SM_alpha_i} yields an FPS such that $\theta_\mathrm{LS}\notin\Theta$, this outcome indicates a possibly incorrect choice of the model class, and the procedure should therefore be repeated with different hyperparameters.
\subsubsection{Scenario sampling of the FPS}~\label{sec:FPS_scenario_sampling}
Now that the FPS is defined, we need to determine the optimal model parameterisation $\theta^\star \in \Theta$ that achieves the highest performance on the validation dataset.\\
However, since exploring all possible values $\theta \in\Theta$ may be computationally intractable, we restrict the analysis to $N_\mathrm s$ scenarios $\theta^{[t]}$, for $t=1,\dots,N_\mathrm s$, drawn from $\Theta$. 
For each scenario, model~\eqref{eq:RNN_model_ss} is simulated using the validation dataset input sequence $\mathcal U_\mathrm v$, resulting in the output simulated trajectory $\{y^{[t]}(k)\}_{k=1}^{N_\mathrm v}$. To assess the model performance of each scenario, following the approach in~\cite{d2025data}, we compute the minimum distance of the simulated output of scenario $t$ from the noisy output data tube $\tilde{\mathcal Y}(k)$ as
\begin{equation}\label{eq:set_distance_index}
d^{[t]}\coloneq\min_{k=\tau_\mathrm w+1,\dots,N_\mathrm v}\text{dist}({y^{[t]}(k),\tilde{\mathcal Y}(k)}),
\end{equation}
where
\[\tilde{\mathcal Y}(k)\coloneq\{\tilde y(k)\in\R^p\,:\,|\tilde y^{(i)}(k)-y_\mathrm v^{(i)}(k)|\leq\bar\eta^{(i)},\,\forall i=1,\dots,p\}.\]
The optimal parameter value $\theta^\star$ is then selected as the one associated with the minimum distance $d^{\star}\coloneq\min_{t=1,\dots,N_\mathrm s}d^{[t]}$.\\
The following result provides a criterion for selecting the number of scenarios $N_\mathrm s$ \cite{d2025data}.
\begin{proposition}
    Let $\theta^{[N_\mathrm s+1]}\in \Theta$ be a random matrix with probability distribution $\mathbb P_{\theta^{[N_\mathrm s+1]}}$ over $\Theta$. Also, let $\epsilon\in(0,1)$ and $\beta\in(0,1)$ be two user-defined constants. For all $N_\mathrm s\geq1$ such that $N_\mathrm s\geq\log_{1-\epsilon}(\beta)$, then with probability $1-\beta$ it holds that $\mathbb P\{\theta^{[N_\mathrm s+1]}\in\Theta\,:\,d^{[N_\mathrm s+1]}< d^{\star}\}\leq\epsilon$.
\end{proposition}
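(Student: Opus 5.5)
This is the standard scenario-based argument for the empirical minimum of i.i.d. samples. Throughout I assume, as the statement implicitly requires, that the scenarios $\theta^{[1]},\dots,\theta^{[N_\mathrm s+1]}$ are drawn independently from the same distribution $\mathbb P_\theta$ over $\Theta$. Since each $d^{[t]}$ is a deterministic function of $\theta^{[t]}$, obtained by simulating \eqref{eq:RNN_model_ss} on the fixed validation data and evaluating \eqref{eq:set_distance_index}, the values $d^{[1]},\dots,d^{[N_\mathrm s+1]}$ are i.i.d. real random variables. Let $F$ be their common distribution. The quantity to control is the "violation probability" $V(d^\star)\coloneq\mathbb P\{d<d^\star\}=F^-(d^\star)$, the left limit of the CDF at the empirical minimum $d^\star=\min_{t\le N_\mathrm s} d^{[t]}$.

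\textbf{Step 1 (reduction to a tail event).} Define the threshold $q_\epsilon\coloneq\sup\{c: \mathbb P\{d<c\}\le\epsilon\}$. By left-continuity of $c\mapsto\mathbb P\{d<c\}$, we have $\mathbb P\{d<q_\epsilon\}\le\epsilon$. Consequently, the event $V(d^\star)>\epsilon$ implies $d^\star>q_\epsilon$, by monotonicity of $c\mapsto \mathbb P\{d<c\}$. The event $d^\star>q_\epsilon$ in turn means that every sample satisfies $d^{[t]}>q_\epsilon$.

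\textbf{Step 2 (bounding each sample).} By the definition of $q_\epsilon$ as a supremum, for any $c>q_\epsilon$ we have $\mathbb P\{d<c\}>\epsilon$. Letting $c\downarrow q_\epsilon$ gives $\mathbb P\{d\le q_\epsilon\}\ge\epsilon$, hence $\mathbb P\{d>q_\epsilon\}\le 1-\epsilon$.

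\textbf{Step 3 (independence).} Combining Steps 1 and 2 with independence of the samples,
\[
\mathbb P^{N_\mathrm s}\{V(d^\star)>\epsilon\}\le(1-\epsilon)^{N_\mathrm s}\le\beta,
\]
where the last inequality holds whenever $N_\mathrm s\ge\log_{1-\epsilon}\beta$, since $\log_{1-\epsilon}$ is decreasing. The statement then follows.

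\textbf{Main obstacle.} The delicate part is handling atoms in the distribution of $d$, for instance many scenarios giving distance zero. These are what make the quantile argument with strict versus non-strict inequalities necessary in Steps 1 and 2, rather than a simpler continuous-CDF argument. I would handle them precisely through the left-limit definition of $q_\epsilon$ given above.
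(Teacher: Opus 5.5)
Your proof is correct. It is also complete once two small details are written out, noted at the end.

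The paper gives no proof of this proposition. It imports the result from \cite{d2025data}, where it is the standard sample-size bound for the empirical minimum in randomized algorithms. Your argument is the classical proof of that bound:
\begin{itemize}
\item you reduce $\{\mathbb P\{d<d^\star\}>\epsilon\}$ to the event that all $N_\mathrm s$ samples exceed the threshold $q_\epsilon$;
\item you use $\mathbb P\{d\le q_\epsilon\}\ge\epsilon$, obtained via continuity from above;
\item independence then gives $(1-\epsilon)^{N_\mathrm s}\le\beta$.
\end{itemize}
So your route is essentially the one the paper relies on, made self-contained.

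What your version adds is that it makes explicit what the statement leaves implicit. The $N_\mathrm s+1$ scenarios must be i.i.d. Moreover, $d^{[t]}$ must be a fixed function of the drawn sample. This still holds in Algorithm~\ref{alg:set_membership_learning}, even with the well-posedness projection of Algorithm~\ref{alg:scenario_parameter_selection} and the penalty value $d^{[t]}=M$. Atoms such as $d=M$, or $d=0$ when the simulated output stays inside the noise tube, are exactly what your strict/non-strict bookkeeping handles correctly.

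The two details to write out:
\begin{itemize}
\item $q_\epsilon$ is finite. Since $d\ge 0$, $\mathbb P\{d<0\}=0$, so the defining set is nonempty. Since $\mathbb P\{d<c\}\to 1>\epsilon$ as $c\to\infty$, it is bounded above.
\item The claim $\mathbb P\{d<q_\epsilon\}\le\epsilon$ needs monotonicity first. Every $c<q_\epsilon$ satisfies $\mathbb P\{d<c\}\le\epsilon$ by the supremum definition. Only then does left-continuity pass this to $q_\epsilon$.
\end{itemize}
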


However, this approach does not generally ensure that the model extracted from the FPS satisfies the well-posedness condition \eqref{eq:well_posedness_ss}.
Given $\theta^{[t]}\in \Theta$, the closest feasible parameter $\tilde\theta^{[t]}$ satisfying~\eqref{eq:well_posedness_ss} can be computed by solving the LMI problem
\begin{subequations}\label{opt:SM_scenario}
{\small
\begin{align}
    &\min_{\substack{\{c_i\in\R_+\}_{i=1}^p,\,\;
        \tilde Q_{\mathrm e} \in \mathbb{S}_+^{r-\nu},\,
        Q_{s} \in \mathbb{D}_+^\nu,\\
        H_{\mathrm e}\in\mathbb{R}^{p\times (r-\nu)},\,
        H_s\in\mathbb{R}^{p\times \nu}}}\sum_{i=1}^{p}c_i\\
    &\text{subject to:}\notag\\
    &\begin{bmatrix}
        c_i & {\theta^{[t]}}^{(i)}\tilde Q - H^{(i)}\\
        ({\theta^{[t]}}^{(i)}\tilde Q - H^{(i)})^\top &\tilde Q
    \end{bmatrix}\succeq0,\,\forall i=1,\dots,p \label{eq:SM_cstr}\\
    &\text{LMI }\,\eqref{eq:well_posedness_lmi}\notag
\end{align}}
\end{subequations}
and setting $\tilde\theta^{[t]}=H\tilde Q^{-1}$, where $H=[\,H_\mathrm e,\,H_s]$ and $\tilde Q=\diag(\tilde Q_\mathrm e,Q_{s})$.
Note that minimising $c_i$ under constraint~\eqref{eq:SM_cstr} is equivalent to minimising the weighted norm $ \norm{{\theta^{[t]}}^{(i)} - {\tilde\theta^{[t]^{(i)}}}}_{\tilde Q}
$. 
The overall set membership learning procedure is summarised in Algorithm~\ref{alg:set_membership_learning}.
\begin{algorithm}
\small
\caption{Set membership learning}
\label{alg:set_membership_learning}
\begin{algorithmic}[1]
\State Simulate \eqref{eq:RNN_model_x}--\eqref{eq:RNN_model_s} from a random initial condition $x(0)$, using the dataset input sequence $\mathcal U_\mathrm d$ and output sequence $\mathcal Y_\mathrm d$.  
This yields the trajectories $\{x_\mathrm d(k)\}_{k=1}^{N_\mathrm d}$ and $\{s_\mathrm d(k)\}_{k=1}^{N_\mathrm d}$.


\State Compute the FPS 
\[
\begin{aligned}\Theta =& \text{FPS\_computation}
(\{y_\mathrm d(k)\}_{k=\tau_\mathrm w+1}^{N_\mathrm d}, \{\phi_\mathrm d(k)\}_{k=\tau_\mathrm w+1}^{N_\mathrm d},\\& r, p, \tau_\mathrm w, \bar{\eta})\end{aligned}.
\]
\ForAll{$t=1,\dots,N_\mathrm s$} 
    \State Extract a parameter sample $\theta^{[t]}$ ensuring that the model is well-posed
    \[\theta^{[t]}=\text{WellPosed\_scenario\_sampling}(\Theta,t,\nu,p,\tilde B_s^\mathrm o)\]
    
    \If{$\tilde{\theta}^{[t]} \notin \Theta$}
        \State Set $d^{[t]} = M$, where $M$ is a large positive scalar.
    \Else
        \State Simulate model \eqref{eq:RNN_model_ss} using the dataset input sequence $\mathcal U_\mathrm v$, resulting in the output trajectory
        $\{y^{[t]}(k)\}_{k=1}^{N_\mathrm d}$.
        \State Compute $d^{[t]}$ according to \eqref{eq:set_distance_index}.
    \EndIf
\EndFor

\State Select $\theta^\star = \theta^{[t^\star]}$, where
\(t^\star \coloneq \arg\min_{t \in \{1,\dots,N_\mathrm s\}} d^{[t]}.
\)
\end{algorithmic}
\end{algorithm}
\begin{algorithm}
\small
\caption{WellPosed\_scenario\_sampling}
\label{alg:scenario_parameter_selection}
\begin{algorithmic}[1]
\Require $\Theta$, $r$, $\nu$, $p$, $\tilde B_{s}^\mathrm o$
\Ensure $\tilde \theta^\mathrm s$

\State Randomly extract $\theta^{\mathrm s}$ from $(\Theta,\mathbb{P}_{\theta^{\mathrm s}})$.

\State Solve~\eqref{opt:SM_scenario}.

\State Set
    \(
    \tilde{\theta}^\mathrm s = H \tilde Q^{-1},
    \)
    where $H=[H_\mathrm e,H_s]$ and $\tilde Q=\diag(\tilde Q_\mathrm e,Q_{s})$.
\end{algorithmic}
\end{algorithm}
\section{Learning structured models}
\label{sec:learning structured models}
In this section we address the design of a physics-informed procedure to derive a modular plant model, where the modularity is inspired by that of the plant $\mathcal{P}$. In particular, we discuss how to identify a number $n_\mathrm s$ of submodels $\mathcal{M}_i$, each corresponding with a subplant $\mathcal{P}_i$, with $i=1,\dots,n_\mathrm s$, each having as local input and output the pair $(u_i,y_i)$ and where interconnections with the other submodels occur through suitable interconnection variables $\nu^{\mathcal{M}}_{ij}$, for $j\neq i$. To this regard, the main modelling choice lies in the twofold selection of (i) the model interconnection network, and (ii) the coupling variables.\\
Regarding (i), we need to define which submodels have a direct influence on $\mathcal{M}_i$ for all $i\in\mathcal{I}$, i.e., the values of $j\neq i$ such that $\nu_{ij}^{\mathcal{M}}\neq 0$. At the same time, problem (ii) requires to define how $\nu_{ij}^{\mathcal{M}}$, when not identically equal to zero, is composed. Formally speaking, the scope is to define, for all $i\in\mathcal I$, the neighboring sets: $\mathcal{N}^{\mathcal{M}}_{u,i}\coloneq\{j\in\mathcal I\setminus\{i\}:\nu^{\mathcal{M}}_{ij}\neq 0\text{ includes entries of }u_j\}$ and
$\mathcal{N}^{\mathcal{M}}_{x,i}\coloneq\{j\in\mathcal I\setminus\{i\}:\nu^{\mathcal{M}}_{ij}\neq 0\text{ includes state variables of }\mathcal{M}_j\}$.\\ 
As also discussed in Section \ref{subsec:modular_system}, problems (i) and (ii) are strictly connected together, and their solution essentially depends upon the adopted decomposition approach: as discussed, in this work we make reference to non-overlapping decomposition. This approach allows us to learn models with a sparse interconnection structure that reflects the topology of the plant $\mathcal{P}$ by retaining only the direct physical links between subplants. 
\subsection{The proposed modular learning approach}
The proposed modelling approach requires to set
$\mathcal{N}^{\mathcal{M}}_{u,i}=\mathcal{N}^{\mathcal{P}}_{u,i}$ and 
$\mathcal{N}^{\mathcal{M}}_{x,i}=\mathcal{N}^{\mathcal{P}}_{x,i}$ and to compose also the state $x$ and vector $s$ of~\eqref{eq:RNN_model} by $n_\mathrm{s}$ non-overlapping sub-vectors $x_i \in \mathbb{R}^{n_i}$ and $s_i \in \mathbb{R}^{\nu_i}$, respectively, where $\sum_{i=1}^{n_\mathrm{s}} n_i = n$ and $\sum_{i=1}^{n_\mathrm{s}} \nu_i = \nu$.\\
As discussed, this modelling choice is consistent with non-overlapping decompositions adopted in a model-based framework~\cite{farina2018distributed}. On the one hand, the dimensionality of the submodels is reduced, leading to improved interpretability and lower computational cost. On the other hand, as a downside, enforcing on the RNN a structure consistent with the interconnection pattern of the underlying plant inevitably introduces approximations. In particular, it is important to remark that we may perform an approximation every time we set $\nu_{ij}^{\mathcal M}=0$ if and only if $\nu_{ij}^{\mathcal P}=0$ and that such approximation
highly depends upon the sampling time \footnote{To understand this statement, recall that $\nu_{ij}^{\mathcal P}=0$ means that there is no direct connection between the variables of the subplants $\mathcal{P}_j$ and $\mathcal{P}_i$, i.e., that, in a physics-based continuous-time mathematical model of the plant, the internal variables of plant $\mathcal{P}_j$ do not appear in the dynamics of $\mathcal{P}_i$. However, a direct path between $\mathcal{P}_j$ and $\mathcal{P}_i$ may be present in $\mathcal{G}^{\mathcal{P}}$: in this case, in the corresponding discrete-time model obtained by zero-order-hold discretization, the internal variables of plant $\mathcal{P}_j$ will appear in the dynamics of $\mathcal{P}_i$. This is discussed, for the linear case, in~\cite{farina2013block}. As discussed in~\cite{farina2013block}, the derivation of modular discrete-time models suitable for decentralised and distributed controller design requires the adoption of approximation methods (e.g., the so-called mixed Euler-ZOH), which introduce approximation errors that vanish only if the sampling time $T_\mathrm s\to0$. This consideration sheds some light also on learning (sampled data-based) structured models, since it clarifies that any derived structured discrete-time model will unavoidably lead to approximations, and that the approximation error can be reduced by reducing the sampling time, whose choice becomes important and critical.}.\smallskip\\
In principle, there are two ways to embed the desired interconnection structure into the model \eqref{eq:RNN_model}: imposing this structure on the matrices in \eqref{eq:RNN_model_x}–\eqref{eq:RNN_model_s}, or on the free parameters. However, since the matrices in \eqref{eq:RNN_model_x}–\eqref{eq:RNN_model_s} act as hyperparameters in the proposed learning algorithm, the first option would essentially fix the intensity of the imposed interconnections to arbitrary values. Conversely, structuring the free parameters allows the intensity of the imposed interconnections to be learned from data.\\
Based on this remark, the matrices in \eqref{eq:RNN_model_x}--\eqref{eq:RNN_model_s} are selected with a block-diagonal structure, i.e.,
\begin{equation}\label{eq:hyperparam_structured}
\begin{aligned}
&A= \diag(A_1,\dots, A_{n_\mathrm{s}}), 
\!\!&\!\! &B= \diag(B_1,\dots, B_{n_\mathrm{s}}),\\
&B_s= \diag(\tilde B_{s,1},\dots, \tilde B_{s,n_\mathrm{s}}), 
\!\!&\!\! &B_y= \diag(\tilde B_{y,1},\dots, \tilde B_{y,n_\mathrm{s}}),\\
&\tilde A= \diag(\tilde A_1,\dots, \tilde A_{n_\mathrm{s}}), 
\!\!&\!\! &\tilde B = \diag(\tilde B_1,\dots, \tilde B_{n_\mathrm{s}}),\\
&\tilde B_s = \diag(\tilde B_{s,1},\dots, \tilde B_{s,n_\mathrm{s}}), 
\!\!&\!\! &\tilde B_y = \diag(\tilde B_{y,1},\dots, \tilde B_{y,n_\mathrm{s}}).
\end{aligned}
\end{equation}
where $A_i \in \mathbb{R}^{n_i \times n_i}$, 
$B_i \in \mathbb{R}^{n_i \times m_i}$, 
$B_{s,i} \in \mathbb{R}^{n_i \times \nu_i}$, 
$B_{y,i} \in \mathbb{R}^{n_i \times p_i}$, 
$\tilde A_i \in \mathbb{R}^{\nu_i \times n_i}$, 
$\tilde B_i \in \mathbb{R}^{\nu_i \times m_i}$, 
$\tilde B_{s,i} \in \mathbb{R}^{\nu_i \times \nu_i}$, and 
$\tilde B_{y,i} \in \mathbb{R}^{\nu_i \times p_i}$, 
for all $i \in \mathcal I$.
On the other hand, the free parameters are structured in accordance with the interconnection graph, i.e.,
\begin{equation}\label{eq:free_param_structured}
C \in \mathcal C, \quad D \in \mathcal D, \quad D_s \in \mathcal D_s,
\end{equation}
where
\begin{align*}
\mathcal C \coloneqq 
\Big\{
C \in \mathbb{R}^{p \times n} \,:\,
C = [C_{i,j}]_{i,j \in \mathcal I},\,
C_{i,j} \in \mathbb{R}^{p_i \times n_j},\\
C_{i,j} = 0 \ \text{if } j \notin \mathcal N_{x,i}^{\mathcal M}\cup\{i\}
\Big\}, \\
\mathcal D \coloneqq 
\Big\{
D \in \mathbb{R}^{p \times m} \,:\,
D = [D_{i,j}]_{i,j \in \mathcal I},\,
D_{i,j} \in \mathbb{R}^{p_i \times m_j},\\
D_{i,j} = 0 \ \text{if } j \notin \mathcal N_{u,i}^{\mathcal M}\cup\{i\}
\Big\}, \\
\mathcal D_s \coloneqq 
\Big\{ D_s \in \mathbb R^{p\times \nu} \ : \ 
D_s = \diag(D_{s,1},\dots,D_{s,n_\mathrm s}), \\ 
D_{s,i}\in\mathbb R^{p_i\times \nu_i}
\Big\}.
\end{align*}
At a submodel level, this results in 
\begin{subequations}\label{eq:PI_RNN_subsystem_i}
{\small
    \begin{align}
        &x_i(k+1)\!=\!A_{x,i}x_i(k){+}B_{u,i}u_i(k){+}B_{s,i}^\mathrm os_i(k)+B_{y,i}y_i(k)\label{eq:PI_RNN_subsystem_i_x}\\
        &s_i(k)=\sigma_i(\tilde A_{x,i}x_i(k){+}\tilde B_{u,i}u_i(k){+}\tilde B_{s,i}^\mathrm os_i(k){+}\tilde B_{y,i}y_i(k))\label{eq:PI_RNN_subsystem_i_s}\\
        &y_i(k)= \!\!\!\! \sum_{j\in\mathcal{N}^{\mathcal{M}}_{x,i}\cup\{i\}}\!\!\!\!C_{i,j}x_j(k)+\!\!\!\!\sum_{j\in\mathcal{N}^{\mathcal{M}}_{u,i}\cup\{i\}}\!\!\!\!D_{i,j}u_j(k)+D_{s,i}s_i(k)
        \label{eq:PI_RNN_subsystem_i_y}
    \end{align}}
\end{subequations}
for each $i\in\mathcal{I}$, where $\sigma_i(\cdot):\R^{\nu_i}\to\R^{\nu_i}$. Specifically, the state dynamics of the submodels are fully decoupled, while the output of each submodel depends on its own state and input and on the state and input of its neighbors. 
\subsection{The learning algorithm}
In this context, we replace Assumptions~\ref{ass:dataset} and \ref{ass:dataset_val} with the following one.
\begin{ass}\label{ass:dataset_local}
Each subsystem $\mathcal{P}_i$, for $i \in \mathcal{I}$, has access to a training dataset consisting of an applied input sequence
$\mathcal U_{\mathrm d,i} = \{u_{\mathrm d,i}(k)\}_{k=1}^{N_\mathrm d}$
and a measured output sequence
$\mathcal Y_{\mathrm d,i} = \{y_{\mathrm d,i}(k)\}_{k=1}^{N_\mathrm d}$,
and to a validation dataset consisting of an applied input sequence
$\mathcal U_{\mathrm v,i} = \{u_{\mathrm v,i}(k)\}_{k=1}^{N_\mathrm v}$
and a measured output sequence
$\mathcal Y_{\mathrm v,i} = \{y_{\mathrm v,i}(k)\}_{k=1}^{N_\mathrm v}$.
\end{ass}
From the structural perspective,~\eqref{eq:PI_RNN_subsystem_i} is similar to~\eqref{eq:RNN_model}: each subsystem $\mathcal{P}_i$ has to perform the procedure described in Algorithm~\ref{alg:distributed_identification}, which essentially corresponds with the procedure described in Section~\ref{sec:model}, parallelised across the subsystems.
\begin{algorithm}
\small
\caption{Distributed structured learning}
\label{alg:distributed_identification}
\begin{algorithmic}[1]
\ForAll{$i \in \mathcal I$}


\State Simulate the local state dynamics
\eqref{eq:PI_RNN_subsystem_i_x}--\eqref{eq:PI_RNN_subsystem_i_s}
from a random initial condition $x_i(0)$, using the local dataset
$(\mathcal U_{\mathrm d,i},\mathcal Y_{\mathrm d,i})$, resulting in the trajectories
$\{x_{\mathrm d,i}(k)\}_{k=0}^{N_\mathrm d}$ and
$\{s_{\mathrm d,i}(k)\}_{k=0}^{N_\mathrm d}$.

\State Define the local data vectors
\[
\begin{array}{cccc}
X_{\mathrm{d},i}\coloneq
\begin{bmatrix}
x_{\mathrm{d},i}(\tau_\mathrm w+1)^\top\\
\vdots\\
x_{\mathrm{d},i}(N_\mathrm d)^\top
\end{bmatrix},
&
U_{\mathrm{d},i}\coloneq
\begin{bmatrix}
u_{\mathrm{d},i}(\tau_\mathrm w+1)^\top\\
\vdots\\
u_{\mathrm{d},i}(N_\mathrm d)^\top
\end{bmatrix},
\\[4mm]
S_{\mathrm{d},i}\coloneq
\begin{bmatrix}
s_{\mathrm{d},i}(\tau_\mathrm w+1)^\top\\
\vdots\\
s_{\mathrm{d},i}(N_\mathrm d)^\top
\end{bmatrix},
&
Y_{\mathrm{d},i}\coloneq
\begin{bmatrix}
y_{\mathrm{d},i}(\tau_\mathrm w+1)^\top\\
\vdots\\
y_{\mathrm{d},i}(N_\mathrm d)^\top
\end{bmatrix}.
\end{array}
\]

\State Receive the sequences $U_{\mathrm{d},j}$ and $X_{\mathrm{d},j}$ from all
$j \in \mathcal N^{\mathcal M}_{u,i}$ and
$j \in \mathcal N^{\mathcal M}_{x,i}$, respectively.

\State Define the local free parameter
\[
\theta_i \coloneq
\begin{bmatrix}
(C)_{\{i\},\mathcal N^{\mathcal M}_{x,i}\cup\{i\}}&
(D)_{\{i\},\mathcal N^{\mathcal M}_{u,i}\cup\{i\}}&
D_{s,i}
\end{bmatrix}
\in \mathbb{R}^{p_i \times r_i}.
\]

\If{the least-squares approach is selected}
    \State Define
    \[
    \Phi_i \coloneq
    \begin{bmatrix}
    \left[X_{\mathrm{d},j}\right]_{j \in \mathcal N^{\mathcal M}_{x,i}\cup\{i\}}&
    \left[U_{\mathrm{d},j}\right]_{j \in \mathcal N^{\mathcal M}_{u,i}\cup\{i\}}&
    S_{\mathrm{d},i}
    \end{bmatrix}.
    \]
    \State Compute
    \(
    \theta_i^\star=\) 
    WellPosed\_LS
    \((
    Y_{\mathrm d,i},
    \Phi_i,     
    r_i,
    \nu_i,p_i,\) \(
    B_{s,i}^{\mathrm o},
    \tau_\mathrm w,
    \beta
    ).
    \)
\Else
    \State Define, for all $k=\tau_\mathrm w+1,\dots,N_\mathrm d$, \(\phi_{\mathrm d,i}(k) \coloneq \begin{bmatrix}x_{\mathrm d,i}(k)^\top& u_{\mathrm d,i}(k)^\top&s_{\mathrm d,i}(k)^\top\end{bmatrix}\)
    \State Compute the local FPS
    \[
\begin{aligned}
\Theta_i &=
\text{FPS\_computation}\bigl(
\{y_{\mathrm d,i}(k)\}_{k=\tau_\mathrm w+1}^{N_\mathrm d},
\{\phi_{\mathrm d,i}(k)\}_{k=\tau_\mathrm w+1}^{N_\mathrm d}, \\
&\qquad
r_i,
p_i,
\tau_{\mathrm w},
\bar{\eta}_i
\bigr),
\end{aligned}
    \]
    where $\bar{\eta}_i\in\R^{p_i}$ is the noise bound associated with $y_i$.
\EndIf

\EndFor

\If{the set-membership approach is selected}
    \ForAll{$t\in\{1,\dots,N_\mathrm s\}$}
        \ForAll{$i\in\mathcal I$}
            \State Extract 
            a sample $\theta_i^{[t]}$ ensuring the well-posedness of $\mathcal M_i$
            \[\theta_i^{[t]}=\text{WellPosed\_scenario\_sampling}(\Theta_i,r_i,\nu_i,p_i,\tilde B_{s,i}^\mathrm o)\]
        \EndFor
         \If{$\exists\, i \in \mathcal I$ such that $\theta_i^{[t]} \notin \Theta_i$}
            \State Set $d^{[t]} = M$, where $M$ is a large positive scalar.
        \Else
            \State Simulate in parallel the submodels $\mathcal M_i$, for all $i \in \mathcal I$,
            using the local input sequences $\{\mathcal U_{\mathrm v,i}\}_{i=1}^{n_s}$,
            resulting in the output trajectory $\{y^{[t]}(k)\}_{k=0}^{N_\mathrm v}$.
            \State Compute $d^{[t]}$ according to~\eqref{eq:set_distance_index}.
        \EndIf
        
    \EndFor
    \State Select $\theta^\star_i=\theta_i^{[t^\star]}$ for all $i\in\mathcal I$.
\EndIf

\end{algorithmic}
\end{algorithm}

After identification of the free parameter vector $\theta_i$, we define matrices $A_i=A_{x,i}+B_{y,i}C_{i,i}$, $B_i=B_{u,i}+B_{y,i}D_{i,i}$, $B_{s,i}=B_{s,i}^o+B_{y,i}D_{s,i}$, $\tilde A_i=\tilde A_{x,i}+\tilde B_{y,i}C_{i,i}$, $\tilde B_i=\tilde B_{u,i}+\tilde B_{y,i}D_{i,i}$, $\tilde B_{s,i}=\tilde B_{s,i}^o+\tilde B_{y,i}D_{s,i}$, $B_{w,i}=B_{y,i}$, and $ \tilde B_{w,i}=\tilde B_{y,i}$, for all $i\in\mathcal{I}$. The so-obtained submodel $\mathcal{M}_i$ is therefore 
\begin{equation}\label{eq:non_overlapping_ss}
\begin{aligned}
&x_i(k+1) = A_{i}x_i(k) {+} B_{i}u_i(k) {+} B_{s,i}s_i(k) {+} B_{w,i}w_i(k)\\
&s_i(k) = \sigma_i\left(\tilde A_{i}x_i(k) {+} \tilde B_{i}u_i(k) {+} \tilde B_{s,i}s_i(k) {+} \tilde B_{w,i}w_i(k) \right)\\
&y_i(k) = C_{i,i}x_i(k)+D_{i,i}u_i(k)+D_{s,i}s_i(k) + w_i(k)
\end{aligned}
\end{equation}
where the term
\[w_i(k)=\sum_{j\in\mathcal{N}^{\mathcal{M}}_{y,i}} C_{i,j}x_{j}(k)+\sum_{j\in\mathcal{N}^{\mathcal{M}}_{u,i}}D_{i,j}u_{j}(k),\]
accounts for the effect of physical couplings of neighbouring subsystems.\smallskip\\
The learning procedure described in Algorithm~\ref{alg:distributed_identification} is inherently scalable. In fact, the computational complexity grows with the size of the individual submodel rather than with the size of the full model. Additionally, data exchange occurs only between neighbouring subsystems and involves only input and state information, making the approach well-suited to settings with privacy constraints. Note, in fact, that the RNN state generally does not correspond to physically meaningful (and therefore sensitive) quantities. In contrast, output data, which are usually more sensitive, remain local to each subsystem.\\
Furthermore, since $n_i\leq n$, for all $i\in\mathcal {I} $, this approach yields reduced-order models, thereby significantly reducing the number of decision variables when the model is employed in a decentralised or distributed control scheme. 
%
%
%
%
\section{Learning models with stability guarantees}\label{sec:diss_learning}
In this section we address the design of physics-informed procedures for deriving a plant model that enjoys the same stability property of the plant $\mathcal P$.
As discussed, we focus on the $\delta$ISS property, which is a strong and robust stability property that, among other things, can be leveraged to simplify the design of theoretically sound control algorithms~\cite{schimperna2024robust,bonassi2024nonlinear}.
\subsection{Learning unstructured stable models}\label{sec:diss_identification}
In this section we discuss how the procedures presented in Section~\ref{sec:training_alg} for training unstructured models can be modified to ensure that the learned model~\eqref{eq:RNN_model_ss} enjoys the $\delta$ISS property.
To do this, the following proposition is required.
\begin{proposition}\label{prop:diss_opt}
Consider the untrained model~\eqref{eq:RNN_model} and let Assumption~\ref{ass:sigmoid_function} hold. If there exist matrices $H_x\in\R^{p\times n}$, $H_u\in\R^{p\times m}$, $H_s\in\R^{p\times \nu}$, $Q_\mathrm C,\tilde Q_x \in \mathbb{S}_+^n$, $Q_\mathrm D, \tilde Q_u \in \mathbb{S}_+^m$,and $Q_{s}\in\mathbb{D}_+^\nu$, such that the condition
\begin{equation}\label{eq:diss_lmi}
\resizebox{\columnwidth}{!}{$
\begin{bmatrix}
        Q_\mathrm C{-}\tilde Q_x\!\!&\!\!-Q_\mathrm C\tilde A_x^\top{-}H_x^\top\tilde B_y^\top \!\!&\!\! 0 \!\!&\!\! Q_\mathrm CA_x^\top+H_x^\top B_y^\top\\
        -\tilde A_xQ_\mathrm C {-}\tilde B_yH_x \!\!&\!\! U_\mathrm w \!\!&\!\! -\tilde B_uQ_\mathrm D-\tilde B_yH_u \!\!&\!\! Q_{s}B_s^\top \\
        0 \!\!&\!\! -Q_\mathrm D\tilde B_u^\top-H_u^\top\tilde B_y^\top \!\!&\!\! \tilde Q_u \!\!&\!\! Q_\mathrm DB_u^\top{+}H_u^\top B_y^\top\\
        A_xQ_\mathrm C{+}B_yH_x \!\!&\!\! B_sQ_{s} \!\!&\!\! B_uQ_\mathrm D{+}B_yH_u \!\!&\!\! Q_\mathrm C
    \end{bmatrix}\succeq 0,$}
\end{equation}
holds, where \(
U_\mathrm{w} = 2 Q_\mathrm{S} - \tilde{B}_s^{\mathrm{o}} Q_\mathrm{S} - \tilde{B}_y H_s - Q_\mathrm{S} \tilde{B}_s^{\mathrm{o}\top} - H_s^\top \tilde{B}_y^\top
\), then, setting $\theta=H\tilde Q^{-1}$, where $H=[\,H_x,\,H_u,\,H_s\,]$ and $\tilde Q=\diag(Q_\mathrm C,Q_\mathrm D,Q_{s})$, model~\eqref{eq:RNN_model_ss} is $\delta$ISS with respect to $\mathbb{R}^n$ and $\mathbb{R}^m$.
\hfill{}$\square$
\end{proposition}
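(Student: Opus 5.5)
Our plan is to show that the LMI~\eqref{eq:diss_lmi}, after the change of variables $H=\theta\tilde Q$, is a congruence transformation of a standard incremental dissipation inequality for the trained model~\eqref{eq:RNN_model_ss}. From that inequality we will read off a quadratic dissipation-form $\delta$ISS Lyapunov function and then invoke Theorem~\ref{th:local_dISS}.

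\textbf{Step 1: substitute back.} With $\theta=[C,\,D,\,D_s]$ and $H=\theta\tilde Q$ we have $H_x=CQ_\mathrm C$, $H_u=DQ_\mathrm D$ and $H_s=D_sQ_s$. Then:
\begin{itemize}
\item $A_xQ_\mathrm C+B_yH_x=AQ_\mathrm C$ and $\tilde A_xQ_\mathrm C+\tilde B_yH_x=\tilde AQ_\mathrm C$;
\item $B_uQ_\mathrm D+B_yH_u=BQ_\mathrm D$ and $\tilde B_uQ_\mathrm D+\tilde B_yH_u=\tilde BQ_\mathrm D$;
\item $U_\mathrm w=2Q_s-\tilde B_sQ_s-Q_s\tilde B_s^\top$.
\end{itemize}
The $B_s$ appearing in~\eqref{eq:diss_lmi} we read as the trained $B_s^\mathrm o+B_yD_s$. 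Some care is needed here: the $(4,2)$ block must then be $B_sQ_s=B_s^\mathrm oQ_s+B_yH_s$ to remain linear in $H_s$. We will note this and treat that block accordingly.

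\textbf{Step 2: undo the congruence.} Apply the congruence with $\diag(P,\Lambda,Q_\mathrm D^{-1},P)$, where $P=Q_\mathrm C^{-1}$ and $\Lambda=Q_s^{-1}\in\mathbb D_+^\nu$. We then take a Schur complement on the last block row and column. This yields the matrix inequality
\[
\begin{bmatrix} P-\bar Q_x & -\tilde A^\top\Lambda & 0\\ -\Lambda\tilde A & 2\Lambda-\Lambda\tilde B_s-\tilde B_s^\top\Lambda & -\Lambda\tilde B\\ 0 & -\tilde B^\top\Lambda & \bar Q_u\end{bmatrix}-\begin{bmatrix}A^\top\\ B_s^\top\\ B^\top\end{bmatrix}P\begin{bmatrix}A & B_s & B\end{bmatrix}\succeq0,
\]
with $\bar Q_x=P\tilde Q_xP\succ0$ and $\bar Q_u=Q_\mathrm D^{-1}\tilde Q_uQ_\mathrm D^{-1}\succ0$.

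Its $(2,2)$ block gives $2\Lambda-\Lambda\tilde B_s-\tilde B_s^\top\Lambda\succeq0$. Strictness is needed for well-posedness through~\eqref{eq:well_posedness_ss}. We will obtain it by noting that, since $\bar Q_x\succ0$, the LMI can be perturbed slightly without loss of feasibility, or by assuming strict feasibility as in Proposition~\ref{prop:identification_lmi}.

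\textbf{Step 3: multiplier inequality.} Take two trajectories $(x_a,u_a,s_a)$ and $(x_b,u_b,s_b)$, and set $\Delta x=x_a-x_b$, $\Delta u=u_a-u_b$, $\Delta s=s_a-s_b$ and $\Delta v=\tilde A\Delta x+\tilde B\Delta u+\tilde B_s\Delta s$. By Assumption~\ref{ass:sigmoid_function} each $\sigma^{(i)}$ is slope-restricted in $[0,1]$. Hence, for every diagonal $\Lambda\succ0$, the incremental sector condition gives
\[
2\Delta s^\top\Lambda(\Delta v-\Delta s)\geq0.
\]

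\textbf{Step 4: dissipation inequality.} Pre- and post-multiply the matrix inequality of Step 2 by $[\Delta x^\top,\Delta s^\top,\Delta u^\top]$ and use Step 3. With $V=\Delta x^\top P\Delta x$ we get
\[
V(k+1)-V(k)\leq-\Delta x^\top\bar Q_x\Delta x+\Delta u^\top\bar Q_u\Delta u-2\Delta s^\top\Lambda(\Delta v-\Delta s)\leq-\lambda_{\min}(\bar Q_x)\norm{\Delta x}^2+\lambda_{\max}(\bar Q_u)\norm{\Delta u}^2 .
\]
The quadratic bounds on $V$ come from $P\succ0$. Since $\mathcal X=\mathbb R^n$ is trivially invariant, $V$ is a dissipation-form $\delta$ISS Lyapunov function, and Theorem~\ref{th:local_dISS} concludes the proof.

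\textbf{Main obstacle.} The main obstacle is the bookkeeping in Step 2. The signs must be checked: the quadratic form has to produce exactly $+2\Delta s^\top\Lambda\Delta s-2\Delta s^\top\Lambda\Delta v$, so that the sector term is subtracted. The zero $(1,3)$ block must also be consistent with the cross terms. If a sign mismatch appears, we would first try flipping the sign convention of $\Delta s$ in the multiplier before concluding that a typo is present.
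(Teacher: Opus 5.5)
Your proposal is correct and is essentially the paper's proof. The paper substitutes $H_x=CQ_\mathrm C$, $H_u=DQ_\mathrm D$, $H_s=D_sQ_s$, reading $B_s$ as the trained $B_s^\mathrm o+B_yD_s$ as you do, and applies the congruence with $\diag(P,\Lambda,Q_\mathrm D^{-1},P)$ to reach the LMI of Lemma~\ref{th:dISS_condition}, whose proof is exactly your Schur-complement plus incremental-sector argument with $V=\norm{\Delta x}_P^2$ (you have simply inlined it). Drop your first suggestion for obtaining strict well-posedness by perturbation: slack from $\tilde Q_x\succ0$ gives no slack in the $(2,2)$ block $2\Lambda-\Lambda\tilde B_s-\tilde B_s^\top\Lambda$, which can stay singular under the non-strict LMI, so strictness must come from \eqref{eq:well_posedness_lmi} imposed separately, as the paper's algorithms do (the proposition's proof does not address well-posedness at all).
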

The proof of Proposition~\ref{prop:diss_opt} is provided in the Appendix for clarity reasons. Exploiting this result, we can now modify the two training approaches described in Section~\ref{sec:training_alg} to ensure that \eqref{eq:RNN_model_ss} is well-posed and $\delta$ISS.\\ 
On the one hand, as far as the least-squares approach is concerned, we need to replace \emph{Step~3} in Algorithm~\ref{alg:LS_learning} with Algorithm~\ref{alg:LS_identification_diss}. More specifically, we compute $\theta^\star$ as
\begin{multline*}\theta^\star=\text{WellPosed\_$\delta$ISS\_LS}(Y_\mathrm d, \Phi,\\ r, n, m, p, A_x, B_u, B_s^\mathrm o, B_y, \tilde A_x, \tilde B_u, \tilde B_s^\mathrm o, \tilde B_y, \tau_\mathrm w, \beta).\end{multline*}
On the other hand, regarding the set-membership procedure, we need to replace \emph{Step~4} in Algorithm~\ref{alg:set_membership_learning} with Algorithm~\ref{alg:scenario_parameter_selection_diss}. In particular, we extract $\theta^{[t]}$ as 
\begin{multline*}\theta^{[t]}=\text{WellPosed\_$\delta$ISS\_scenario\_sampling}(\Theta,\\r, n, m, p, A_x, B_u, B_s^\mathrm o, B_y, \tilde A_x, \tilde B_u, \tilde B_s^\mathrm o, \tilde B_y, \tau_\mathrm w, \beta).\end{multline*}
\begin{algorithm}
\small
\caption{WellPosed\_$\delta$ISS\_LS}
\label{alg:LS_identification_diss}
\begin{algorithmic}[1]
\Require $Y_\mathrm d$, $\Phi$, $r$, $n$, $m$, $p$, $A_x$, $B_u$, $B_s^\mathrm o$, $B_y$, $\tilde A_x$, $\tilde B_u$, $\tilde B_s^\mathrm o$, $\tilde B_y$, $\tau_\mathrm w$, $\beta$
\Ensure $\theta^\star$
\State Solve
\begin{align*}
    &\min_{\substack{
        s,\lambda\in \mathbb{R_+},\,\;
        Q_\mathrm C ,\tilde Q_x \in \mathbb{S}_+^n,\\
        Q_\mathrm D,\tilde Q_u \in \mathbb{S}_+^m,\;
        Q_{s} \in \mathbb{D}_+^\nu,\\
        H_x\in\R^{p\times n},\,H_u\in\R^{p\times m}, \, H_s\in\R^{p\times \nu}
    }}
    s+\beta \lambda,\\
    &\text{subject to: }\notag\\
    &\eqref{eq:diss_opt_cost_lmi},\eqref{eq:well_posedness_lmi},\eqref{eq:lmi_training_condition_relaxed},\eqref{eq:diss_lmi}\notag
\end{align*}
\State Set
\(
\theta^\star = H \tilde Q^{-1}\), where
$H = [\,H_x,\, H_u,\, H_s\,]$ and
$\tilde Q = \operatorname{diag}(Q_\mathrm C,Q_\mathrm D, Q_{s})$.
\end{algorithmic}
\end{algorithm}
\begin{algorithm}
\small
\caption{WellPosed\_$\delta$ISS\_scenario\_sampling}
\label{alg:scenario_parameter_selection_diss}
\begin{algorithmic}[1]
\Require $\Theta$, $r$, $n$, $m$, $p$, $A_x$, $B_u$, $B_s^\mathrm o$, $B_y$, $\tilde A_x$, $\tilde B_u$, $\tilde B_s^\mathrm o$, $\tilde B_y$, 
\Ensure $\tilde \theta^\mathrm s$

\State Randomly extract $\theta^{\mathrm s}$ from $(\Theta,\mathbb{P}_{\theta^{\mathrm s}})$.

\State Solve
\begin{align*}
    &\min_{\substack{\{c_i\in\R_+\}_{i=1}^p,\,
        Q_{\mathrm C},\tilde Q_x \in \mathbb{S}_+^{n},\\ Q_{\mathrm D},\tilde Q_u \in \mathbb{S}_+^{m},\,
        Q_{s} \in \mathbb{D}_+^\nu,\\
        H_x\in\mathbb{R}^{p\times n},\,H_u\in\mathbb{R}^{p\times m}
        H_s\in\mathbb{R}^{p\times \nu}}}\sum_{i=1}^{p}c_i\\
    &\text{subject to:}\notag\\
    &\eqref{eq:SM_cstr},\eqref{eq:well_posedness_lmi},\eqref{eq:diss_lmi}\notag
\end{align*}

\State  Set
\(
\theta^\star = H \tilde Q^{-1}\), where
$H = [\,H_x,\, H_u,\, H_s\,]$ and
$\tilde Q = \operatorname{diag}(Q_\mathrm C,Q_\mathrm D, Q_{s})$.
\end{algorithmic}
\end{algorithm}

\subsection{Learning structured stable models}
In this section we address the problem of imparting the $\delta$ISS during the training of structured plant models. In particular, we focus on two aspects: (i) imparting the $\delta$ISS to the single submodel $\mathcal M_i$, where $i\in\mathcal I$; (ii) imparting the $\delta$ISS to the overall structured model $\mathcal M$ obtained by interconnecting the submodels $\mathcal M_i$, for all $i\in\mathcal I$.
\subsubsection{Imparting the $\delta$ISS to the submodels}
The following proposition provides a condition for the $\delta$ISS of $\mathcal M_i$.
\begin{proposition}\label{prop:diss_submodel}
Consider the local dynamics~\eqref{eq:PI_RNN_subsystem_i} for the $i$-th submodel, where $i\in\mathcal I$ and let Assumption~\ref{ass:sigmoid_function} hold. Define the vector \(v_i(k)\coloneq[(x(k))^\top_{\mathcal N_{x,i}^\mathcal M},\,(u(k))^\top_{ \mathcal N_{u,i}^\mathcal M\cup\{i\}}]^\top\in\R^{n_{v,i}}\) and matrices  \begin{align*}
B_{v,i}^\mathrm{o} 
&\coloneqq \big[(A_x)_{\{i\},\,\mathcal N_{x,i}^{\mathcal M}},\;
               (B_u)_{\{i\},\,\mathcal N_{u,i}^{\mathcal M}\cup\{i\}}\big],\\[2mm]
\tilde B_{v,i}^\mathrm{o} 
&\coloneqq \big[(\tilde A_x)_{\{i\},\,\mathcal N_{x,i}^{\mathcal M}},\;
               (\tilde B_u)_{\{i\},\,\mathcal N_{u,i}^{\mathcal M}\cup\{i\}}\big],\\[2mm]
D_{v,i} 
&\coloneqq \big[(C)_{\{i\},\,\mathcal N_{x,i}^{\mathcal M}},\;
               (D)_{\{i\},\,\mathcal N_{u,i}^{\mathcal M}\cup\{i\}}\big].
\end{align*}
Assume that there exist matrices $H_{x,i}\in\R^{p_i\times n_i}$, $H_{v,i}\in\R^{p_i\times n_{v,i}}$, $H_{s,i}\in\R^{\nu_i\times p_i}$, $Q_{\mathrm C,i},\tilde Q_{x,i}\in\mathbb S_+^{n_i}$, $Q_{\mathrm V,i},\tilde Q_{v,i}\in\mathbb S_+^{n_{v,i}}$, and  $Q_{\mathrm S,i}\in\mathbb D_+^{\nu_i}$, such that the condition
\begin{equation}\label{eq:diss_lmi_submodel}
\resizebox{\columnwidth}{!}{$
\begin{bmatrix}
        Q_{\mathrm C,i}-\tilde Q_{x,i}&-Q_{\mathrm C,i}\tilde A_{x,i}^\top{-}H_{x,i}^\top\tilde B_{y,i}^\top & 0 & Q_{\mathrm C,i}A_{x,i}^\top+H_{x,i}^\top B_{y,i}^\top\\
        -\tilde A_{x,i}Q_{\mathrm C,i} {-}\tilde B_{y,i}H_{x,i} & U_{\mathrm w,i} & -\tilde B_{v,i}^\mathrm oQ_{\mathrm V,i}{-}\tilde B_{y,i}H_{v,i} & Q_{\mathrm S,i}B_{s,i}^\top \\
        0 & -Q_{\mathrm V,i}\tilde {B_{v,i}^\mathrm o}^\top-H_{v,i}^\top\tilde B_{y,i}^\top & \tilde Q_{v,i} & Q_{\mathrm V,i}{B_{v,i}^\mathrm o}^\top{+}H_{v,i}^\top B_{y,i}^\top\\
        A_{x,i}Q_{\mathrm C,i}{+}B_{y,i}H_{x,i} & B_{s,i}Q_{\mathrm S,i} & B_{v,i}^\mathrm oQ_{\mathrm V,i}{+}B_{y,i}H_{v,i} & Q_{\mathrm C,i}
    \end{bmatrix}\succeq 0,$}
\end{equation}
holds, where \(
U_{\mathrm w,i}=2Q_{\mathrm S,i} {-} \tilde B_{s,i}^{\mathrm{o}}Q_{\mathrm S,i} {-} \tilde B_{y,i} H_{s,i} {-} Q_{\mathrm S,i}\tilde B_{s,i}^{{\mathrm{o}}^\top} -  H_{s,i}^\top\tilde B_{y,i}^\top
\). Setting $\theta_i=H_i\tilde Q_i^{-1}$, where $H_i=[H_{x,i}\,H_{v,i}\,H_{s,i}]$ and $\tilde Q_i=\diag(Q_{\mathrm C,i},Q_{\mathrm V,i},Q_{\mathrm S,i})$, the trained model \eqref{eq:non_overlapping_ss} of $\mathcal M_i$ is $\delta$ISS with respect to $\mathbb{R}^{n_i}$ and $\mathbb{R}^{n_{v,i}}$, i.e., where $v_i(k)$ is accounted for as the exogenous input/perturbation vector.
\hfill$\square$
\end{proposition}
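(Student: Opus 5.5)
Since this result is the local counterpart of Proposition~\ref{prop:diss_opt}, my plan is to reduce it to that proposition (or to replay its proof) after rewriting the trained submodel \eqref{eq:non_overlapping_ss} as an unstructured model of the form \eqref{eq:RNN_model_ss}, with $v_i$ acting as the exogenous input. First I will substitute the local output relation \eqref{eq:PI_RNN_subsystem_i_y} into \eqref{eq:PI_RNN_subsystem_i_x}--\eqref{eq:PI_RNN_subsystem_i_s}, grouping the terms in $x_i$, $s_i$ and the remaining signals $v_i$. This gives
\begin{align*}
x_i(k+1)&=(A_{x,i}+B_{y,i}C_{i,i})x_i(k)+(B_{s,i}^\mathrm o+B_{y,i}D_{s,i})s_i(k)+(B_{v,i}^\mathrm o+B_{y,i}D_{v,i})v_i(k),\\
s_i(k)&=\sigma_i\big((\tilde A_{x,i}+\tilde B_{y,i}C_{i,i})x_i(k)+(\tilde B_{s,i}^\mathrm o+\tilde B_{y,i}D_{s,i})s_i(k)+(\tilde B_{v,i}^\mathrm o+\tilde B_{y,i}D_{v,i})v_i(k)\big).
\end{align*}
This is exactly the closed-loop structure of \eqref{eq:RNN_model_ss}, with state $x_i$ and input $v_i$, and with local free parameter $\theta_i=[C_{i,i},\,D_{v,i},\,D_{s,i}]$ (the own-input block $D_{i,i}$ is placed inside $D_{v,i}$). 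The correspondence $(A_x,B_u,B_s^\mathrm o,B_y,\tilde A_x,\tilde B_u,\tilde B_s^\mathrm o,\tilde B_y)\leftrightarrow(A_{x,i},B_{v,i}^\mathrm o,B_{s,i}^\mathrm o,B_{y,i},\tilde A_{x,i},\tilde B_{v,i}^\mathrm o,\tilde B_{s,i}^\mathrm o,\tilde B_{y,i})$ turns \eqref{eq:diss_lmi} into \eqref{eq:diss_lmi_submodel} term by term. Once this reindexing is checked, the conclusion follows directly from Proposition~\ref{prop:diss_opt}, applied with $n\to n_i$, $m\to n_{v,i}$, $\nu\to\nu_i$.

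If the appendix proof of Proposition~\ref{prop:diss_opt} is not invoked as a black box, I will replay it. Since $\tilde Q_i$ is block diagonal, $H_i=\theta_i\tilde Q_i$ splits as $H_{x,i}=C_{i,i}Q_{\mathrm C,i}$, $H_{v,i}=D_{v,i}Q_{\mathrm V,i}$ and $H_{s,i}=D_{s,i}Q_{\mathrm S,i}$. Substituting these, the entries of \eqref{eq:diss_lmi_submodel} become $A_iQ_{\mathrm C,i}$, $\tilde A_iQ_{\mathrm C,i}$, $\tilde B_{v,i}Q_{\mathrm V,i}$, and so on, in terms of the trained matrices. A congruence with $\diag(Q_{\mathrm C,i}^{-1},Q_{\mathrm S,i}^{-1},Q_{\mathrm V,i}^{-1},Q_{\mathrm C,i}^{-1})$ followed by a Schur complement on the last block then yields a matrix inequality in $P_i=Q_{\mathrm C,i}^{-1}$ and $\Lambda_i=Q_{\mathrm S,i}^{-1}\in\mathbb D_+^{\nu_i}$. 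The $(2,2)$ block, after congruence, reads $2\Lambda_i-\Lambda_i\tilde B_{s,i}-\tilde B_{s,i}^\top\Lambda_i$, which is the well-posedness condition \eqref{eq:well_posedness_ss}. So $s_i$ is uniquely defined and the incremental analysis is meaningful.

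The $\delta$ISS argument then uses $V(x_a,x_b)=\norm{x_a-x_b}^2_{P_i}$. I take two trajectories with the same dynamics, write $\Delta x$, $\Delta s$, $\Delta v$, and let $\Delta z$ be the corresponding argument of $\sigma_i$. Under Assumption~\ref{ass:sigmoid_function}, each component is slope-restricted in $[0,1]$, which gives the incremental sector condition $\Delta s^\top\Lambda_i(\Delta z-\Delta s)\ge0$. Multiplying the transformed inequality by $[\Delta x^\top,\Delta s^\top,\Delta v^\top]$ on both sides and adding the nonnegative sector term gives
\[
V^+ - V\le -\Delta x^\top Q_{\mathrm C,i}^{-1}\tilde Q_{x,i}Q_{\mathrm C,i}^{-1}\Delta x+\Delta v^\top Q_{\mathrm V,i}^{-1}\tilde Q_{v,i}Q_{\mathrm V,i}^{-1}\Delta v .
\]
With $V$ quadratic and positive definite, this is a dissipation-form $\delta$ISS Lyapunov function on $\R^{n_i}\times\R^{n_{v,i}}$, and Theorem~\ref{th:local_dISS} concludes.

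The main obstacle is the bookkeeping. I must check that the signs and placement of the $\tilde Q_{x,i}$ and $\tilde Q_{v,i}$ blocks (the $(1,1)$ entry $Q_{\mathrm C,i}-\tilde Q_{x,i}$) yield strict decrease in $\Delta x$ and a bounded gain in $\Delta v$, given that the LMI is only $\succeq0$. I also need the own input $u_i$, counted inside $v_i$, to be handled consistently with the $(D)_{\{i\},\mathcal N_{u,i}^{\mathcal M}\cup\{i\}}$ blocks. I will follow the appendix proof of Proposition~\ref{prop:diss_opt} for both points rather than re-derive them.
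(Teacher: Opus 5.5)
Your proposal is correct and matches the paper's proof. Rewriting the submodel with $v_i$ as exogenous input puts it exactly in the form~\eqref{eq:RNN_model_ss}, with free parameter $\theta_i=[C_{i,i},\,D_{v,i},\,D_{s,i}]$; then~\eqref{eq:diss_lmi_submodel} is~\eqref{eq:diss_lmi} under that relabelling, and Proposition~\ref{prop:diss_opt} applies directly. One caveat on your optional replay: the LMI is only non-strict, so its $(2,2)$ block gives $2\Lambda_i-\Lambda_i\tilde B_{s,i}-\tilde B_{s,i}^\top\Lambda_i\succeq 0$ rather than the strict inequality~\eqref{eq:well_posedness_ss}, so well-posedness does not follow from~\eqref{eq:diss_lmi_submodel} alone (the paper enforces it separately through the strict constraint~\eqref{eq:well_posedness_lmi}, in its local form, in the learning algorithms).
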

The proof of Proposition~\ref{prop:diss_submodel} can be found in the Appendix. In view of it, the distributed procedure presented in Section~\ref{sec:learning structured models} can be modified so as to guarantee the well-posedness and $\delta$ISS of $\mathcal M_i$.
In particular, Algorithm~\ref{alg:distributed_identification} is modified as follows:
\begin{itemize}
\item \emph{Step~8} is replaced by
\begin{multline*}\theta_i^\star=\text{WellPosed\_}\delta\text{ISS\_LS}(Y_{\mathrm d,i},\Phi_i,A_{x,i},\\B_{v,i},B_{s,i}^\mathrm o,B_{y,i},\tilde A_{x,i},\tilde B_{v,i},\tilde B_{s,i}^\mathrm o,\tilde B_{y,i},\tau_\mathrm w,\beta)\end{multline*}
\item \emph{Step~17} is replaced by
\begin{multline*}\theta_i^\star=\text{WellPosed\_}\delta\text{ISS\_scenario\_sampling}(A_{x,i},\\B_{v,i},B_{s,i}^\mathrm o,B_{y,i},\tilde A_{x,i},\tilde B_{v,i},\tilde B_{s,i}^\mathrm o,\tilde B_{y,i},\Theta_i)\end{multline*}
\end{itemize}
\subsubsection{Imparting the $\delta$ISS to the overall plant model}
The following proposition provides a condition for the $\delta$ISS of the structured model $\mathcal M$. 
\begin{proposition}\label{prop:diss_non_overlapping}
Consider the untrained model~\eqref{eq:RNN_model}, structured according to~\eqref{eq:hyperparam_structured}–\eqref{eq:free_param_structured}, and let Assumption~\ref{ass:sigmoid_function} hold. Assume that there exist matrices
\begin{itemize}
    \item $H_x\in\mathcal C$, $H_u\in\mathcal D$, and $H_s=\diag(H_s^1,\dots,H_s^{n_s})$ where $H_s^{i}\in\mathbb R^{p_i\times \nu_i}$ for all $i\in\mathcal I$,
    
    \item
    $Q_\mathrm C=\diag(Q_\mathrm C^1,\dots,Q_\mathrm C^{n_\mathrm s})$, 
    $Q_\mathrm D=\diag(Q_\mathrm D^1,\dots,Q_\mathrm D^{n_\mathrm s})$, and
    $Q_{s}=\diag(Q_{s}^1,\dots,Q_{s}^{n_\mathrm s})$,
    where $Q_\mathrm C^i\in\mathbb S^{n_i}_+$, 
    $Q_\mathrm D^i\in\mathbb S^{m_i}_+$, 
    and $Q_{s}^i\in\mathbb D^{\nu_i}_+$ for all $i\in\mathcal I$,
\end{itemize}
such that condition~\eqref{eq:diss_lmi} holds.\\
Then, setting $\theta_i=H_i\tilde Q_i^{-1}$ for all $i\in\mathcal{I}$, where
\begin{equation}\label{eq_cstr_diss_non_overl_H}
H_{i} = \begin{bmatrix}(H_x)_{\{i\},\mathcal N_{x,i}^\mathcal M\cup\{i\}}& (H_u)_{\{i\},\,\mathcal N_{u,i}^\mathcal M\cup\{i\}}& H_s^{i}\end{bmatrix},
\end{equation}
and
\begin{equation}\label{eq_cstr_diss_non_overl_Q}
\tilde Q_{i} = \diag\left((Q_\mathrm C)_{\mathcal N_{x,i}^\mathcal M\cup\{i\},\,\mathcal N_{x,i}^\mathcal M\cup\{i\}},\,
(Q_\mathrm D)_{\mathcal N_{u,i}^\mathcal M\cup\{i\},\,\mathcal N_{u,i}^\mathcal M\cup\{i\}},Q_{s}^{i}\right),
\end{equation}
$\mathcal M$ is $\delta$ISS with respect to $\mathbb{R}^n$ and $\mathbb{R}^m$. 
\hfill$\square$
\end{proposition}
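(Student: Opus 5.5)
The plan is to reduce Proposition~\ref{prop:diss_non_overlapping} to Proposition~\ref{prop:diss_opt}. The argument has three steps: identify the parameters, reconcile the interconnected submodels with model~\eqref{eq:RNN_model_ss}, and check that the structure is preserved.

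\emph{Step 1: identify the global parameters.} Assemble the block matrices $H=[\,H_x,\,H_u,\,H_s\,]$ and $\tilde Q=\diag(Q_\mathrm C,Q_\mathrm D,Q_s)$. Since these satisfy \eqref{eq:diss_lmi} by hypothesis, Proposition~\ref{prop:diss_opt} applies directly. With $\theta=H\tilde Q^{-1}$, the trained model \eqref{eq:RNN_model_ss} built from the untrained model~\eqref{eq:RNN_model} is $\delta$ISS with respect to $\mathbb R^n$ and $\mathbb R^m$. The remaining work is to show two things. First, this global $\theta=[C\;D\;D_s]$ coincides block-wise with the local parameters $\theta_i=H_i\tilde Q_i^{-1}$ defined through \eqref{eq_cstr_diss_non_overl_H}--\eqref{eq_cstr_diss_non_overl_Q}. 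Second, the interconnection of the submodels \eqref{eq:non_overlapping_ss} is exactly \eqref{eq:RNN_model_ss} with that $\theta$.

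\emph{Step 2: the sparsity computation.} Because $Q_\mathrm C$, $Q_\mathrm D$ and $Q_s$ are block diagonal with blocks conforming to the partitions of $x$, $u$ and $s$, we have $\tilde Q^{-1}=\diag((Q_\mathrm C^j)^{-1},(Q_\mathrm D^j)^{-1},(Q_s^j)^{-1})$. Hence $C=H_xQ_\mathrm C^{-1}$ has blocks $C_{i,j}=(H_x)_{i,j}(Q_\mathrm C^j)^{-1}$, and analogously $D_{i,j}=(H_u)_{i,j}(Q_\mathrm D^j)^{-1}$ and $D_{s,i}=H_s^i(Q_s^i)^{-1}$.
\begin{itemize}
\item Since $H_x\in\mathcal C$, $H_u\in\mathcal D$ and $H_s$ is block diagonal, right multiplication by a block-diagonal matrix preserves the zero pattern. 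Therefore $C\in\mathcal C$, $D\in\mathcal D$ and $D_s\in\mathcal D_s$, so \eqref{eq:free_param_structured} holds.
\item The $i$-th block row of $\theta$, restricted to the nonzero column blocks $\mathcal N_{x,i}^\mathcal M\cup\{i\}$, $\mathcal N_{u,i}^\mathcal M\cup\{i\}$ and $\{i\}$, is exactly $H_i\tilde Q_i^{-1}$. This uses that the principal submatrix of a block-diagonal matrix is block diagonal, so its inverse is the corresponding submatrix of the inverse.
\end{itemize}
Thus $\theta_i$ is precisely the $i$-th block row of the centrally certified $\theta$.

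\emph{Step 3: the interconnection equals the global model.} Substitute $w_i$ into \eqref{eq:non_overlapping_ss}. Since the hyperparameters are block diagonal as in \eqref{eq:hyperparam_structured}, and $A=A_x+B_yC$ with $B_y$ block diagonal, the stacked equations give the dynamics
\[
x_i(k+1)=A_{x,i}x_i+B_{u,i}u_i+B^\mathrm o_{s,i}s_i+B_{y,i}y_i,
\]
and likewise for $s_i$. Their stack is exactly \eqref{eq:RNN_model} with output \eqref{eq:RNN_model_y}, i.e.\ \eqref{eq:RNN_model_ss}. Well-posedness follows from the $(2,2)$ block of \eqref{eq:diss_lmi}, $U_\mathrm w\succ0$ up to the usual strictness handling, exactly as in Proposition~\ref{prop:identification_lmi}. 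Hence the interconnection $\mathcal M$ inherits $\delta$ISS.

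The main obstacle is Step~2. Its argument hinges on every block of $\tilde Q$ being block diagonal, since a dense $Q_\mathrm C$ would mix neighbour columns and destroy the sparsity. This is exactly why the statement imposes the block-diagonal $Q$ structure, and the remaining care is notational, namely the index-set restrictions $(\cdot)_{\mathcal I,\mathcal J}$.
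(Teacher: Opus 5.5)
Your proposal is correct and follows essentially the paper's route. The paper uses the block-diagonality of $Q_\mathrm C$, $Q_\mathrm D$, $Q_s$ and the sparsity of $H_x\in\mathcal C$, $H_u\in\mathcal D$ to show that the parameters assembled from the local $\theta_i=H_i\tilde Q_i^{-1}$ satisfy $H_x=CQ_\mathrm C$, $H_u=DQ_\mathrm D$, $H_s=D_sQ_s$ (i.e.\ $\theta=H\tilde Q^{-1}$), and then invokes Proposition~\ref{prop:diss_opt}. You run the same identification in the opposite direction (global $\theta$ to local blocks) and spell out what the paper leaves implicit: that the interconnected submodels form exactly the global trained model, and the well-posedness caveat.
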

The proof of Proposition~\ref{prop:diss_non_overlapping} has been moved to the Appendix for better clarity.
To exploit this result, we define the sets
$
\mathcal Q_\mathrm C \coloneqq 
\bigl\{ Q_\mathrm C \in \mathbb S_+^n : 
Q_\mathrm C = \diag(Q_{\mathrm C}^1,\dots,Q_{\mathrm C}^{n_\mathrm s}), \, 
Q_{\mathrm C}^i\in\mathbb S_+^{n_i}
\bigr\}$ and
$\mathcal Q_\mathrm D \coloneqq 
\bigl\{ Q_\mathrm D \in \mathbb S_+^m : 
Q_\mathrm D = \diag(Q_{\mathrm D}^1,\dots,Q_{\mathrm D}^{n_\mathrm s}), \, 
Q_{\mathrm D}^i\in\mathbb S_+^{m_i}
\bigr\}$.\\
Algorithm~\ref{alg:distributed_identification} is modified as follows:
\begin{itemize}
    \item \emph{Step~8} is removed, and the following centralised problem is solved
\begin{subequations}\label{opt:LS_identification_structured_diss}
{\small
\begin{align}
    &\min_{\substack{
        s_i,\lambda_i\in \mathbb{R_+},\,\forall i\in\mathcal I\\
        \tilde Q_x\in\mathbb S_+^{n},\tilde Q_u\in\mathbb S_+^{m}\\
        Q_\mathrm C\in\mathcal Q_\mathrm C,Q_\mathrm D\in\mathcal Q_\mathrm D,Q_{s}\in\mathbb D_+^{\nu}\\
        H_x\in\mathcal C,H_u\in\mathcal D,H_s\in\mathcal D_s
    }}
    \sum_{i=1}^{n_\mathrm s} \left(s_i+\beta \lambda_i\right)\,,\\
    &\text{subject to:}\notag\\
    &\eqref{eq:diss_lmi},\notag\\
    &\forall i\in\mathcal I:\notag\\
    &\quad \begin{bmatrix}
        s_i \!\!\!&\!\!\!  H_i \!-\! (Q_{\mathrm d,i}^{-1}\Phi_i^\top Y_{\mathrm d,i})^\top\tilde Q_i\\
        H_i^\top \!-\! \tilde Q_i {Q_{\mathrm d,i}}^{-1}\Phi_i^\top Y_{\mathrm d,i} \!\!\!&\!\!\! \tilde Q_i
    \end{bmatrix}\succeq 0,\notag\\
    &\quad \tilde Q_i-\frac{1}{N_\mathrm d-\tau_\mathrm w}Q_{\mathrm d,i}+\lambda_i I_{r_i}\succeq0,\notag\\
    &\quad -\tilde Q_i+\frac{1}{N_\mathrm d-\tau_\mathrm w}Q_{\mathrm d,i}+\lambda_i I_{r_i}\succeq0,\notag\\
    &\quad 2Q_{s}^i{-}\tilde B_{s,i}^\mathrm oQ_{s}^i-\tilde B_{y,i}H_s^i{-}Q_{s}^i\tilde{B}_{s,i}^{\mathrm{o}\,\top}{-}H_{s}^{i^\top}\tilde B_{y,i}^\top{\succ}0\label{eq:well_posedness_lmi_submodel}
\end{align}
}
\end{subequations}
where $H_i$ and $\tilde Q_i$ are computed according to~\eqref{eq_cstr_diss_non_overl_H} and  \eqref{eq_cstr_diss_non_overl_Q}. The local parameters are then computed by setting $\theta^\star_i=H_i\tilde Q_i^{-1}$, for all $i\in\mathcal I$.
\item \emph{Steps~16--18} are removed. A sample $\tilde \theta_i^{[t]}$ is randomly extracted from $(\Theta_i,\mathbb P_{\tilde\theta_i^{[t]}})$ for all $i\in\mathcal I$, and the following centralised optimisation problem is solved
\begin{subequations}\label{opt:scenario_sampling_structured_diss}
{\small
\begin{align}
    &\min_{\substack{
        \{c_j^i\in\R_+\}_{j=1}^{p_i},\,\forall i\in\mathcal I\\
        \tilde Q_x\in\mathbb S_+^{n},\tilde Q_u\in\mathbb S_+^{m}\\
        Q_\mathrm C\in\mathcal Q_\mathrm C,Q_\mathrm D\in\mathcal Q_\mathrm D,Q_{s}\in\mathbb D_+^{\nu}\\
        H_x\in\mathcal C,H_u\in\mathcal D,H_s\in\mathcal D_s
    }}
    \sum_{i=1}^{n_\mathrm s}\sum_{j=1}^{p_i} c_j^i,\\
    &\text{subject to:}\notag\\
    &\eqref{eq:diss_lmi},\notag\\
    &\forall i\in\mathcal I:\notag\\
    &\ \ \begin{bmatrix}
        c_j^i \!\!\!&\!\!\! {\theta_i^{[t]}}^{(j)}\tilde Q_i {-} H_i^{(j)}\\
        ({\theta^{[t]}}^{(j)}\tilde Q_i {-} H_i^{(j)})^\top \!\!\!&\!\!\!\tilde Q_i
    \end{bmatrix}\succeq0,\forall j{=}1,{\dots},p_i \notag\\
    &\ \ \eqref{eq:well_posedness_lmi_submodel}\notag
\end{align}
}
\end{subequations}
where $H_i$ and $\tilde Q_i$ are computed according to~\eqref{eq_cstr_diss_non_overl_H} and  \eqref{eq_cstr_diss_non_overl_Q}. We finally set $\theta_i^{[t]}=H_i\tilde Q_i^{-1}$, for all $i\in\mathcal I$.
\end{itemize}
Note that the optimisation problems~\eqref{opt:LS_identification_structured_diss} and~\eqref{opt:scenario_sampling_structured_diss} must be solved in a centralised manner due to the $\delta$ISS constraint~\eqref{eq:diss_lmi}. To improve scalability, future work will be devoted to the parallelisation of~\eqref{opt:LS_identification_structured_diss} and~\eqref{opt:scenario_sampling_structured_diss}, e.g., along the lines of~\cite{conte2016distributed}. Finally, note that, in view of the modular structure of the involved matrices, the number of decision variables (and hence the computational complexity) of the problem is lower than in the case of unstructured systems.
%
%
%
%
%
%
\section{Simulations}
\label{sec:SIMULATIONS}
In this section the proposed physics-informed learning framework is validated through two case studies: the training of the structured model of the chemical plant described in~\cite{stewart2011cooperative} and the learning of the $\delta$ISS model of the pH-neutralisation process previously considered in~\cite{henson2002adaptive}.
\subsection{Learning the structured model of a chemical plant}
In this section we apply Algorithm~\ref{alg:distributed_identification} to the data drawn from the chemical plant described in~\cite{stewart2011cooperative}.
The plant consists of two reactors and a separator. A pure reactant~A enters the reactors, where it is converted into the desired product~B. Product~B can further react to form the undesired side product~C. Inside the reactors, the reaction is controlled by adjusting the inlet flow rates $F_{\mathrm{f}i}$ of reactant~A and the external heat inputs $Q_i$, $i=1,2$. The mixture from the second reactor enters the separator, where additional heat $Q_3$ is supplied. The resulting distillate is split between the downstream process and a recycle stream $F_\mathrm R$ directed back to the first reactor. Inside the two reactors and the separator, the mixture level $H_i$, the temperature $T_i$, and the concentrations of reactants~A and~B, denoted by $x_{\mathrm{A},i}$ and $x_{\mathrm{B},i}$, respectively, $i=1,2,3$, are measured.
The resulting physical model is a nonlinear process consisting of $n=12$ states and $m=6$ inputs. We refer the reader to~\cite{stewart2011cooperative} for a detailed description of the model.\smallskip\\
As discussed in \cite{stewart2011cooperative}, the plant exhibits a modular structure. In particular, it consists of $n_\mathrm{s}=3$ interacting subplants. For each subplant $\mathcal{P}_i$, where~$i \in \{1,2,3\}$, the vector of measured variables is
\(
y_i = [\,H_i,\, x_{\mathrm{A}i},\, x_{\mathrm{B}i},\, T_i\,]^\top.
\)
The control input vectors are defined for $i=1,2$ as
\(
u_i = [\,F_{\mathrm{f}i},\, Q_i\,]^\top,
\)
whereas, for the subsystem~$3$,
\(
u_3 = [\,F_\mathrm R,\, Q_3\,]^\top.
\)
Based on these considerations, the RNN model has been structured into three interconnected submodels with $(n_1,\,\nu_1)=(12,\,4)$, $(n_2,\,\nu_2)=(22,\,4)$, $(n_3,\,\nu_3)=(21,\,5)$, and $\sigma_i^{(j)}=\tanh(\cdot)$ for all $i\in\mathcal I$ and $j=1,\dots,\nu_i$,  where $\tanh(\cdot)$ denotes the hyperbolic tangent function. Moreover, the following neighbouring sets have been defined:
$\mathcal N_{x,1}^\mathcal M = \mathcal N_{u,1}^\mathcal M = \{3\}$, 
$\mathcal N_{x,2}^\mathcal M = \mathcal N_{u,2}^\mathcal M = \{1\}$, and 
$\mathcal N_{x,3}^\mathcal M = \mathcal N_{u,3}^\mathcal M = \{2\}$.
\smallskip\\
Three independent datasets have been collected with a sampling time $T_\mathrm{s}=0.1$~[s]: a training dataset of length $N_\mathrm d=8000$, and validation and test datasets of length $N_\mathrm v=N_\mathrm t=4000$. Each dataset has been generated by feeding the simulator based on the physical equations of the plant with multilevel pseudo-random signals designed to excite the system over different operating frequencies and regions. Bounded additive white noise has been introduced in the final measurements to account for measurement uncertainty. Finally, the data have been normalised so that each variable lies within the interval $[0,\,1]$.\\
Assuming that each submodel has access to its local measurements only, the distributed training of the three submodels has been carried out using Algorithm~\ref{alg:distributed_identification}. In particular, two structured plant models have been derived: one based on the least-squares approach and the other on set-membership.\smallskip\\
To evaluate the performance of the so-obtained models, the following FIT~[\%] index has been computed for each output:
\[\text{FIT}=100\left(1-\frac{1}{N_\mathrm t-\tau _\mathrm w} \sum_{k=\tau_\mathrm w+1}^{N_\mathrm t}\cfrac{\norm{y^{(i)}(k)- y_\mathrm t^{(i)}(k)}}{\norm{y_\mathrm t^{(i)}(k)-y_{\mathrm{avg},i}}}\right),\]
where $y_\mathrm t\in\R^p$ denotes the test dataset output and $y_{\mathrm{avg},i}$ denotes the average value of $y_\mathrm t^{(i)}(k)$. \\
Table~\ref{tab:table_fit} reports the fitting indices of the two models. Both models achieve satisfactory performance; however, the set-membership approach attains a higher average FIT index, suggesting improved modelling accuracy. These results can also be visually inspected in Figure~\ref{fig:reactor2}, where, for compactness, only the modelling performance on the test dataset for submodel~$\mathcal{M}_2$ is reported.\smallskip\\
Although a direct comparison of the results would be unfair due to the possibly different operating conditions under which the plant data are collected, we compare our results with those in~\cite{bonassi2022recurrent}, at least from a general and qualitative perspective. In~\cite{bonassi2022recurrent} the same chemical plant is modelled using a structured RNN composed of three long short-term memory networks and trained using a standard gradient-based algorithm, i.e.,  Truncated
Back-Propagation Through Time (TBPTT). Differently from the approach proposed in this paper, the training procedure in~\cite{bonassi2022recurrent} is centralised and requires access to output measurements from all subsystems. As reported in~\cite[Figure~7]{bonassi2022recurrent}, the training requires approximately $500$ epochs to converge to a satisfactory solution and is therefore computationally more intensive than the learning methods proposed in this paper. Despite this, comparing the fitting indices reported in Table~\ref{tab:table_fit} with those in~\cite[Table~2]{bonassi2022recurrent}, our approach yields comparable results, even though measurement noise is considered in our setting. 
\begin{table}[t]
\centering
\begin{tabular}{c c c}
\hline
\textbf{Output} &
\textbf{Least-squares $[\%]$} &
\textbf{Set membership $[\%]$} \\
\hline
$H_1$  & $97.83$ & $97.80$\\
$x_{A1}$ & $85.65$ & $85.46$ \\
$x_{B1}$ & $87.60$ & $86.12$ \\
$T_1$ & $81.93$ & $86.61$\\
$H_2$ & $96.12$  & $97.02$\\
$x_{A2}$ & $87.12$  & $87.33$\\
$x_{B2}$ & $87.93$  & $87.28$\\
$T_2$ & $81.18$  & $86.46$\\
$H_3$ & $66.72$  & $65.34$\\
$x_{A3}$ & $82.10$  & $82.31$\\
$x_{B3}$ & $74.54$  & $76.81$\\
$T_3$ & $79.66$ & $83.49$\\
\hline\hline
\textbf{Average}& \textbf{84.03} &\textbf{85.17}  \\
\hline
\end{tabular}
\caption{Comparison between set membership and least-squares approaches.}
\label{tab:table_fit}
\end{table}
\begin{figure}
    \centering
    \begin{minipage}{0.49\columnwidth}
        \centering
        \includegraphics[width=\linewidth]{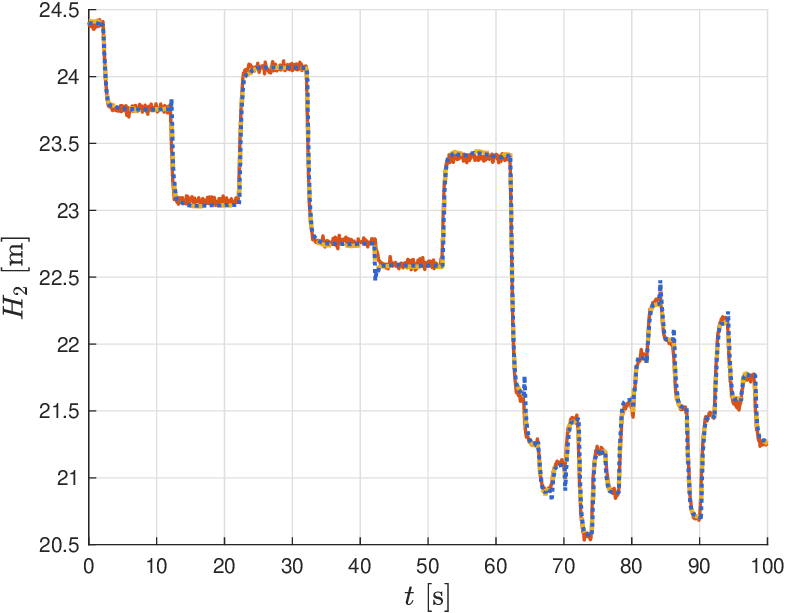}
    \end{minipage}\hfill
    \begin{minipage}{0.49\columnwidth}
        \centering
        \includegraphics[width=\linewidth]{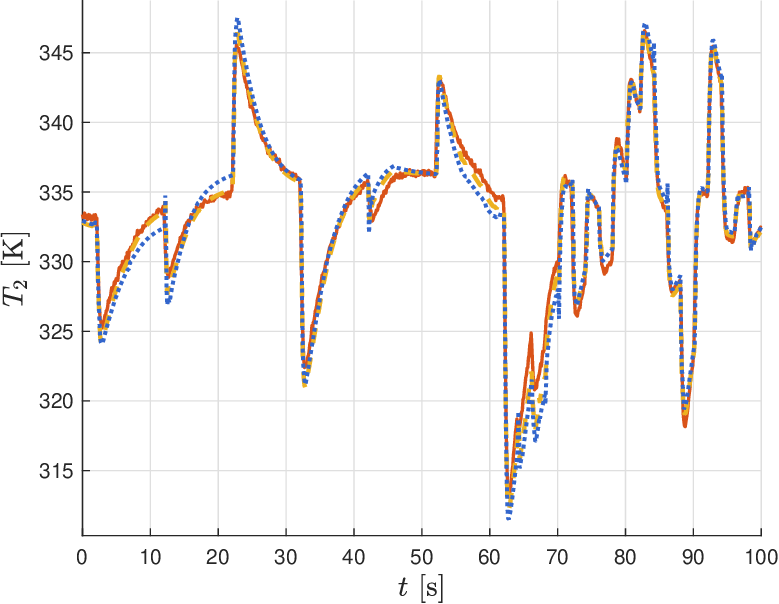}
    \end{minipage}

    \vspace{0.2cm}
    \begin{minipage}{0.49\columnwidth}
        \centering
        \includegraphics[width=\linewidth]{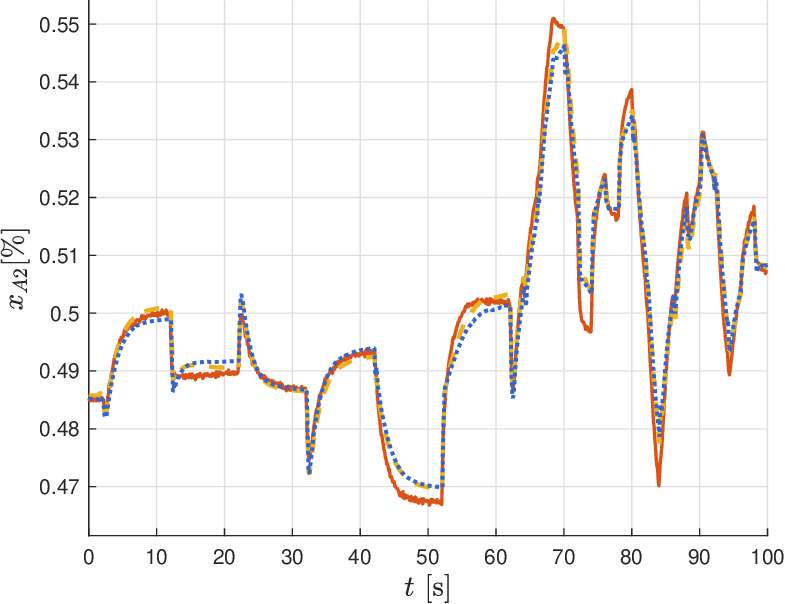}
    \end{minipage}\hfill
    \begin{minipage}{0.49\columnwidth}
        \centering
        \includegraphics[width=\linewidth]{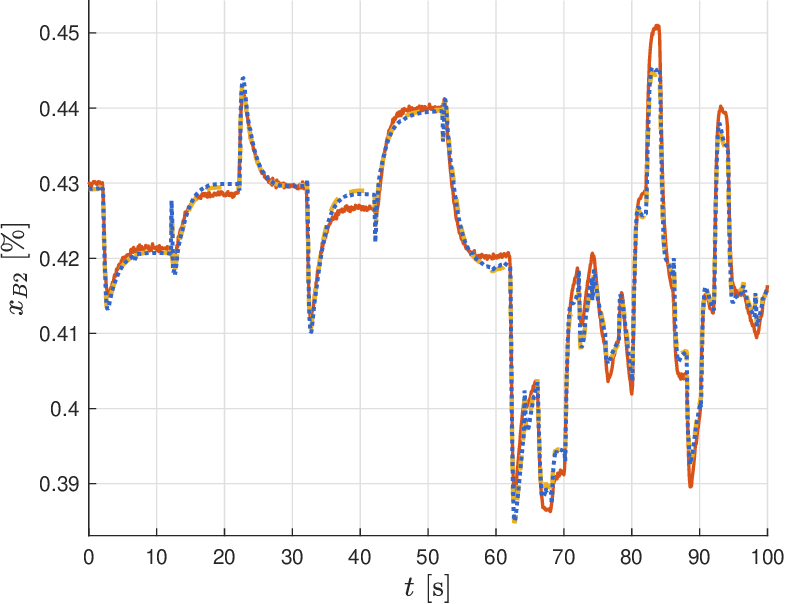}
    \end{minipage}

    \caption{Reactor 2 modelling: prediction of the RNN model obtained using least-square (blue) and set-membership (yellow) compared to the ground truth (red).}
    \label{fig:reactor2}
\end{figure}
\subsection{Learning the $\delta$ISS model of a pH-neutralisation process}
In this section we consider the case of the pH-neutralisation process described in~\cite{henson2002adaptive}.
The physical model of the process, described in detail in~\cite{henson2002adaptive}, is a continuous-time SISO system. The input $u$ is the inlet alkaline-base flow rate, while the measured output $y$ is the pH of the outlet flow rate.\smallskip\\
A simulator based on the physical equations of the system has been implemented in MATLAB for data generation: two independent sequences of length $N_\mathrm d=5000$ and $N_\mathrm t=1500$, used for training and testing, respectively, have been collected with a sampling time $T_\mathrm s=15$~[s] by exciting the simulator with a multilevel pseudo-random signal. Additive white noise has subsequently been added to the final measurements, and the data have been normalised so that both inputs and outputs lie within the interval $[0,\,1]$.
\\
An RNN model with $n=14$, $\nu=8$, and $\sigma^{(i)}=\tanh(\cdot)$, for $i=1,\dots,8$, satisfying the $\delta$ISS property, has been obtained using Algorithm~\ref{alg:LS_learning}, following the procedure described in Section~\ref{sec:diss_identification}.\smallskip\\
The modelling performances of the resulting model on the test dataset are shown in Figure~\ref{fig:diss_model}.  As can be seen, the model achieves remarkable performance, with a FIT index of $90.93\%$.
\begin{figure}
     \centering
     \includegraphics[width=\linewidth]{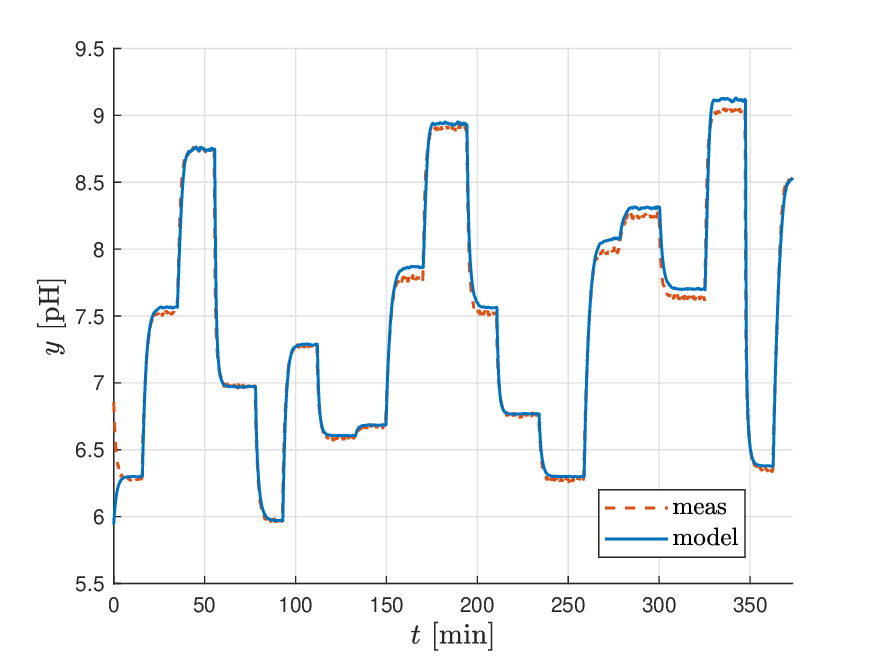}
    \caption{pH-neutralisation modelling: prediction of the RNN model compared to the ground truth.}
      \label{fig:diss_model}
\end{figure}
%
%
\section{Conclusions}
\label{sec:conclusions}
In this paper a framework for physics-informed learning of a class of RNNs has been presented.
First, the unstructured learning problem has been addressed. 
The learning problem is formulated as a convex optimisation one, enabling the inclusion of LMI constraints. Leveraging these results, the training of physics-informed models preserving the plant's structural and stability properties has been addressed. Notably, when the plant exhibits a modular structure, the training can be performed in a distributed manner, making the approach well-suited to large-scale plants.
Future work will focus on using the models obtained using the proposed approach for decentralised or distributed control design.
%
\appendix
\section{Proof of the main results}
In this appendix, we report the proofs of the main results of the paper.\smallskip\\
Before proceeding, we present the following lemma, which provides a sufficient condition for ensuring that model~\eqref{eq:RNN_model_ss} is $\delta$ISS.
\begin{lemma}\label{th:dISS_condition}
    Consider the dynamics~\eqref{eq:RNN_model_ss}, and let Assumption~\ref{ass:sigmoid_function} hold. If there exist matrices $P,Q_x\in\mathbb S_+^n$, $Q_u\in\mathbb S_+^m$, and $\Lambda\in \mathbb{D}_+^\nu$ such that
    \begin{equation}
        \begin{bmatrix}
            P-Q_x \!&\! -\tilde{A}^\top\Lambda \!&\! 0 \!&\! A^\top P\\ -\Lambda\tilde{A} \!&\! 2\Lambda-\Lambda\tilde{B_s}-\tilde{B}_s^\top\Lambda \!&\! -\Lambda \tilde{B} \!&\! B_s^\top P\\
            0\!&\!-\tilde B^\top \Lambda \!&\! Q_u \!&\! B^\top P \\
            PA\!&\!PB_s\!&\!PB\!&\!P
        \end{bmatrix}\succeq0,
        \label{eq:dISS_suff_cond}
    \end{equation}
   then, \eqref{eq:RNN_model_ss} is $\delta$ISS.\hfill{}$\square$
\end{lemma}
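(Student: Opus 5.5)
The plan is to exhibit the quadratic function $V(x_a,x_b)=\Delta x^\top P\,\Delta x$, with $\Delta x\coloneq x_a-x_b$, as a dissipation-form $\delta$ISS Lyapunov function on $\mathcal X=\R^n$, $\mathcal U=\R^m$, and then to invoke Theorem~\ref{th:local_dISS}. Robust positive invariance of $\R^n$ is trivial. The sandwich bounds $\lambda_{\min}(P)\norm{\Delta x}^2\le V\le\lambda_{\max}(P)\norm{\Delta x}^2$ follow from $P\in\mathbb S_+^n$. Before that, well-posedness is needed so that trajectories exist. The $(2,2)$ block of \eqref{eq:dISS_suff_cond} is positive semidefinite. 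Strict positivity is required for \eqref{eq:well_posedness_ss}, which I will obtain either by a small perturbation or by reading it off the principal submatrix as in~\cite{revay2023recurrent}. Hence $s(k)$ is uniquely defined.

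The key step is an incremental sector bound for the nonlinearity. I will set $\Delta s=s_a-s_b$, $\Delta u=u_a-u_b$, and $\Delta v=\tilde A\Delta x+\tilde B\Delta u+\tilde B_s\Delta s$, so that $\Delta s=\sigma(v_a)-\sigma(v_b)$. By Assumption~\ref{ass:sigmoid_function}, each $\sigma^{(i)}$ is monotone with slope in $[0,1]$. The mean value theorem then gives $\Delta s^{(i)}=\delta_i\Delta v^{(i)}$ with $\delta_i\in[0,1]$, hence $\Delta s^{(i)}(\Delta v^{(i)}-\Delta s^{(i)})\ge0$. Multiplying by $2\lambda_i>0$ and summing gives
\[
2\Delta s^\top\Lambda(\tilde A\Delta x+\tilde B\Delta u+\tilde B_s\Delta s)-2\Delta s^\top\Lambda\Delta s\ge0 .
\]

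Next I will apply a Schur complement to \eqref{eq:dISS_suff_cond} with respect to the last block $P\succ0$. This yields
\[
\begin{bmatrix} P-Q_x & -\tilde A^\top\Lambda & 0\\ -\Lambda\tilde A & 2\Lambda-\Lambda\tilde B_s-\tilde B_s^\top\Lambda & -\Lambda\tilde B\\ 0 & -\tilde B^\top\Lambda & Q_u\end{bmatrix}-\begin{bmatrix}A^\top\\ B_s^\top\\ B^\top\end{bmatrix}P\begin{bmatrix}A & B_s & B\end{bmatrix}\succeq0 .
\]
I will pre- and post-multiply this by $[\Delta x^\top,\Delta s^\top,\Delta u^\top]$ and its transpose. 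Using $\Delta x^+=A\Delta x+B_s\Delta s+B\Delta u$, this gives
\[
V(x_a^+,x_b^+)\le \Delta x^\top(P-Q_x)\Delta x+\Delta u^\top Q_u\Delta u-\Big(2\Delta s^\top\Lambda\Delta s-2\Delta s^\top\Lambda(\tilde A\Delta x+\tilde B\Delta u+\tilde B_s\Delta s)\Big).
\]
The bracketed term is nonpositive by the sector bound, so it can be dropped. Hence
\[
V^+-V\le-\lambda_{\min}(Q_x)\norm{\Delta x}^2+\lambda_{\max}(Q_u)\norm{\Delta u}^2,
\]
which is exactly the dissipation inequality with $\alpha_3,\alpha_4$ quadratic. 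Theorem~\ref{th:local_dISS} then concludes the proof.

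The main obstacle I expect is bookkeeping rather than analysis. First, the signs of the off-diagonal blocks must match the cross terms of the sector inequality; the choice of $-\tilde A^\top\Lambda$ and $-\Lambda\tilde B$ is precisely what makes them cancel, and I will verify this carefully. Second, the strict well-posedness condition must be extracted from a merely semidefinite LMI. If perturbing does not work, I would argue well-posedness directly. By monotonicity, the map $s\mapsto s-\sigma(\tilde Ax+\tilde Bu+\tilde B_s s)$ satisfies the same incremental inequality. Uniqueness and existence then follow from~\cite{revay2023recurrent}, once strictness is available from the positive definiteness inherited through $Q_x\succ0$.
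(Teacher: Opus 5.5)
Your core argument matches the paper's proof. You use the same $V=\norm{\Delta x}_P^2$, the same incremental sector inequality $(\Delta v-\Delta s)^\top\Lambda\Delta s\ge 0$ as an S-procedure multiplier, and the same Schur complement with respect to the last block of \eqref{eq:dISS_suff_cond}. The only difference is that you derive the sector bound directly from the mean value theorem instead of citing \cite[Lemma~2]{ravasio2025developmentvelocityformclass}, which is fine. Two points need correcting.

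\textbf{The sign in your bound on $V(x_a^+,x_b^+)$.} Expanding the quadratic form of the Schur-complemented matrix, the terms involving $\Delta s$ sum to
\[
2\Delta s^\top\Lambda\Delta s-2\Delta s^\top\Lambda(\tilde A\Delta x+\tilde B\Delta u+\tilde B_s\Delta s)=-2\Delta s^\top\Lambda(\Delta v-\Delta s).
\]
This enters the upper bound with a plus sign. With the minus you wrote, the right-hand side contains $+2\Delta s^\top\Lambda(\Delta v-\Delta s)\ge 0$, which cannot be dropped from an upper bound. With the correct sign the term is nonpositive and your dissipation inequality follows.

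\textbf{The well-posedness detour.} It does not work as proposed, for three reasons.
\begin{itemize}
\item The LMI only yields $2\Lambda-\Lambda\tilde B_s-\tilde B_s^\top\Lambda\succeq B_s^\top PB_s\succeq 0$.
\item $Q_x$ appears only in the $(1,1)$ block, so its definiteness says nothing about this block.
\item No perturbation can create strictness in general. Take $\nu=1$, $\tilde B_s=1$, $\tilde A=0$, $\tilde B=0$, $B_s=0$ and $A$ Schur stable: \eqref{eq:dISS_suff_cond} is feasible, yet $2\Lambda-2\Lambda\tilde B_s=0$ for every $\Lambda$.
\end{itemize}

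The paper does not treat well-posedness in this lemma at all. Its proof works with any pair of solutions of the implicit equation, and \eqref{eq:well_posedness_ss} is enforced separately via \eqref{eq:well_posedness_lmi}.

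If you want to close the point yourself, do it without strictness:
\begin{itemize}
\item \emph{Existence.} A solution $s(k)$ exists by Brouwer's theorem, since $s\mapsto\sigma(\tilde Ax+\tilde Bu+\tilde B_ss)$ is continuous and maps $[-1,1]^\nu$ into itself.
\item \emph{Uniqueness of the state.} For two solutions with the same $(x,u)$, your dissipation inequality with $\Delta x=0$, $\Delta u=0$ gives $\norm{B_s\Delta s}_P^2\le 0$. So the successor state is unique, and hence so is the state trajectory in Definition~\ref{def:delta_ISS}.
\end{itemize}
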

\textbf{Proof of Lemma~\ref{th:dISS_condition}}
In order to prove  the $\delta$ISS of~\eqref{eq:RNN_model_ss}, we show
the existence of a dissipation-form $\delta$ISS Lyapunov function. In particular, we
consider, as a candidate,  \(V(x_1(k),x_2(k))\coloneq\norm{x_1(k)-x_2(k)}_{P}^2,\) and we show
that condition~\eqref{eq:dISS_suff_cond} implies
\begin{equation}\label{eq:lyapunov_dISS}
\Delta V\leq-\norm{\Delta x(k)}^2_{Q_x}+\norm{\Delta u(k)}^2_{Q_u}.
\end{equation}
where $\Delta V\coloneq V(x_1(k+1),x_2(k+1))-V(x_1(k),x_2(k))$, $\Delta x(k)=x_1(k)-x_2(k)$, and $\Delta u(k)=u_1(k)-u_2(k)$.\\
For each $j=1,2$, consider $x_j(k)\in\R^n$ and $u_j(k)\in\R^m$, and denote $v_j(k)\coloneq\tilde A x_j(k)+\tilde Bu_j(k)+\tilde B_ss_j(k)$, $s_j\coloneq\sigma(v_j)$, and $x_j(k+1)\coloneq Ax_j(k)+Bu_j(k)+B_ss_j(k)$. 
The dynamcis of $\Delta x$ is
\begin{equation}\label{eq:error_dynamics_diss}
\Delta x(k+1)=A\Delta x(k)+B\Delta u(k)+B_s\Delta s(k),
\end{equation}
where $\Delta s(k)\coloneq s_1(k)-s_2(k)$.\\ 
Under Assumption~\ref{ass:sigmoid_function}, in view of~\cite[Lemma~2]{ravasio2025developmentvelocityformclass}, it holds that 
\begin{equation}\label{eq:sector_inequality} (\Delta v(k)-\Delta s(k))^\top \Lambda\Delta s(k)\geq 0,
\end{equation}
for any $\Lambda\in\mathbb D_+^\nu$, where $\Delta v(k)\coloneq v_1(k)-v_2(k)$.
Noting that 
\(\Delta v(k)=\tilde A \Delta x(k)+\tilde B\Delta u(k)+\tilde B_s\Delta s(k),\)
condition~\eqref{eq:sector_inequality} is equivalent to
\begin{equation}\label{eq:sector_inequality_ext} (\tilde A \Delta x(k)+\tilde B\Delta u(k)+(\tilde B_s-I_\nu)\Delta s(k))^\top \Lambda\Delta s(k)\geq 0,
\end{equation}
Defining $\phi_\mathrm s=[\Delta x(k)^\top,\,\Delta s(k)^\top,\,\Delta u(k)^\top]$, condition~\eqref{eq:sector_inequality_ext} implies that
\begin{multline*}
    \phi_\mathrm s^\top\begin{bmatrix}\tilde A^\top\\\tilde B_{s}^\top-I_\nu\\\tilde B^\top\end{bmatrix}\Lambda\begin{bmatrix}0&I_\nu&0\end{bmatrix}\phi_\mathrm s \\+
    \phi_\mathrm s^\top\begin{bmatrix}0\\I_\nu\\0\end{bmatrix}\Lambda\begin{bmatrix}\tilde A&\tilde B_s-I_\nu&\tilde B\end{bmatrix}\phi_\mathrm s\geq 0,
\end{multline*}
which is equivalent to
\begin{equation}\label{eq:sector_inequality_quadr}
\phi_\mathrm s\begin{bmatrix}0&\tilde A\Lambda&0\\\Lambda\tilde A&-2\Lambda +\tilde B_s^\top\Lambda+\Lambda\tilde B_s&\Lambda\tilde B\\
0&\tilde B\Lambda&0\end{bmatrix}\phi_\mathrm s\geq 0\,.
\end{equation}
We can exploit~\eqref{eq:sector_inequality_quadr} to guarantee~\eqref{eq:lyapunov_dISS}, by imposing
\begin{multline}\label{eq:diss_sec_cond}
    \Delta V+\phi_\mathrm s^\top\begin{bmatrix}0&\tilde A\Lambda&0\\\Lambda\tilde A&-2\Lambda +\tilde B_s^\top\Lambda+\Lambda\tilde B_s&\Lambda\tilde B\\
0&\tilde B\Lambda&0\end{bmatrix}\phi_\mathrm s        \\\leq-\norm{\Delta x(k)}^2_{Q_x}+\norm{\Delta u(k)}^2_{Q_u},
\end{multline}
Using~\eqref{eq:error_dynamics_diss}, it follows that
\begin{align*}
\Delta V&=(A\Delta x(k)+B\Delta u(k)+B_s\Delta s(k))^\top P(A\Delta x(k)+\\&\quad B\Delta u(k)+B_s\Delta s(k))-\Delta x(k)^\top P\Delta x(k)\\
&=\phi^\top\left(\begin{bmatrix}
    A^\top\\B_s^\top\\B^\top
\end{bmatrix}P\begin{bmatrix}
    A&B_s&B
\end{bmatrix}-\begin{bmatrix}
    P&0&0\\0&0&0\\0&0&0\end{bmatrix}\right)\phi.
\end{align*}
Substituting the derived expression for $\Delta V$, condition~\eqref{eq:diss_sec_cond} is equivalent to
\begin{multline}\label{eq:diss_condition_matrix_form}
\phi_\mathrm s^\top\Bigg(\begin{bmatrix}
    A^\top\\B_s^\top\\B^\top
\end{bmatrix}P\begin{bmatrix}
    A&B_s&B
\end{bmatrix}\\-\begin{bmatrix}P-Q_x&-\tilde A\Lambda&0\\-\Lambda\tilde A&2\Lambda -\tilde B_s^\top\Lambda-\Lambda\tilde B_s&-\Lambda\tilde B\\
0&-\tilde B\Lambda&Q_u\end{bmatrix}\Bigg)\phi_\mathrm s\leq0.\end{multline}
A sufficinent condition for \eqref{eq:diss_condition_matrix_form} is
   \begin{equation*}
        \begin{bmatrix}
            P-Q_x \!\!&\!\! -\tilde{A}^\top\Lambda \!\!&\!\! 0 \\ -\Lambda\tilde{A} \!\!&\!\! 2\Lambda-\Lambda\tilde{B_s}-\tilde{B}_s^\top\Lambda \!\!&\!\! -\Lambda \tilde{B}\\
            0\!\!&\!\!-\tilde B^\top \Lambda \!\!&\!\! Q_u
        \end{bmatrix} - \begin{bmatrix}
            A^\top \\
            B_s^\top \\
            B^\top
        \end{bmatrix} P \begin{bmatrix}
            A \!\!&\!\! B_s \!\!&\!\! B
        \end{bmatrix}\succeq 0,
        \end{equation*}
which, by resorting to the Schur complement, is equivalent to~\eqref{eq:dISS_suff_cond}, completing the proof.
\hfill{}$\square$\smallskip\\

\textbf{Proof of Proposition~\ref{prop:identification_lmi}.}
The first step of the proof shows that solving~\eqref{opt:least_squares_identification} is equivalent to solving~\eqref{opt:identification_lmi}.
To show this, first we use~\eqref{eq:lmi_training_condition} in $J(\theta)=\cfrac{1}{N_{\mathrm d}^{\mathrm w}}\norm{Y_\mathrm d-\Phi\theta^\top}^2=\cfrac{1}{N_{\mathrm d}^{\mathrm w}}(Y_\mathrm d^\top Y_\mathrm d+\theta\Phi^\top\Phi\theta^\top-2\theta\Phi^\top Y_\mathrm d)$, where $N_{\mathrm d}^{\mathrm w}=N_\mathrm d-\tau_{\mathrm w}$
obtaining
\begin{align*}
J(\theta)&=\cfrac{1}{\gamma N_{\mathrm d}^{\mathrm w}}(\gamma Y_\mathrm d^\top Y_\mathrm d+\theta\gamma Q_\mathrm d\theta^\top-2\theta\gamma Q_\mathrm dQ_\mathrm d^{-1}\Phi^\top Y_\mathrm d)\\&=\cfrac{1}{\gamma N_{\mathrm d}^{\mathrm w}}(\gamma Y_\mathrm d^\top Y_\mathrm d+\theta\tilde Q\tilde Q^{-1}\tilde Q\theta^\top-2\theta\tilde QQ_\mathrm d^{-1}\Phi^\top Y_\mathrm d)
\end{align*}
Now, we set $\theta=H\tilde Q^{-1}$, and we consider $H$ as an optimisation variable. It follows that 
\[J(H)=\cfrac{1}{\gamma N_{\mathrm d}^{\mathrm w}}(\gamma Y_\mathrm d^\top Y_\mathrm d+H\tilde Q^{-1}H^\top-2HQ_\mathrm d^{-1}\Phi^\top Y_\mathrm d).\]
Minimising $J(H)$ is equivalent to minimising
\[\tilde J(H)=(H-(Q_\mathrm d^{-1}\Phi^\top Y_\mathrm d)^\top\tilde Q)\tilde Q^{-1}(H^\top-\tilde Q(Q_\mathrm d^{-1}\Phi^\top Y_\mathrm d)).\]
The minimisation of $\tilde J(H)$ can be rewritten as
\begin{equation}
\begin{aligned}\label{eq:diss_opt_2}
    &\min_{s\in\R,H\in\R^{p\times n+m}}\, s\\
    &\text{subject to:}\\
    &\quad s\geq\tilde J(H)
\end{aligned}
\end{equation}
By resorting to the Schur complement, problem~\eqref{eq:diss_opt_2} is equivalent to \eqref{opt:identification_lmi}.\smallskip\\
The second step of the proof shows that condition~\eqref{eq:well_posedness_lmi} is equivalent to~\eqref{eq:well_posedness_ss}, i.e., it is a sufficient condition for the well-posedness of model~\eqref{eq:RNN_model_ss}. \\
Since $\tilde Q = \operatorname{diag}(\tilde  Q_{\mathrm e},Q_{s})$, it holds that $H_s = D_s Q_{s}$. Therefore, condition~\eqref{eq:well_posedness_lmi} can be rewritten as
\[
2Q_{s} - (\tilde B_s^{\mathrm{o}} + \tilde B_y D_s) Q_{s} - Q_{s}(\tilde B_s^{\mathrm{o}} + \tilde B_y D_s)^\top \succ 0 .
\]
Left- and right-multiplying this inequality by $\Lambda = Q_{s}^{-1}$, and recalling that $\tilde B_s = \tilde B_s^{\mathrm{o}} + \tilde B_y D_s$, yields condition~\eqref{eq:well_posedness_ss}, which completes the proof.
\hfill{}$\square$\smallskip\\
\textbf{Proof of Proposition~\ref{prop:contarctivity_implies_diss}.}
To prove \emph{(i)}, we define matrices $B_{\tilde u}\coloneq[B_u,\,B_y]$ and $\tilde B_{\tilde u}\coloneq[\tilde B_u,\,\tilde B_y]$, and the vector $\tilde u(k)\coloneq[u(k)^\top,\,y(k)^\top]^\top$.\\
Model~\eqref{eq:RNN_model_x}-\eqref{eq:RNN_model_s} can be rewritten as
\begin{equation}\label{eq:RNN_model_yu}
\begin{aligned}
&x(k+1) = A_x x(k) + B_{\tilde{u}} \tilde{u}(k) + B_s^{\mathrm{o}} s(k), \\
&s(k) = \sigma ( \tilde{A}_x x(k) + \tilde{B}_{\tilde{u}} \tilde{u}(k) + \tilde{B}_s^{\mathrm{o}} s(k)).
\end{aligned}
\end{equation}
In view of Lemma~\ref{th:dISS_condition}, a sufficient condition such that \eqref{eq:RNN_model_yu} is $\delta$ISS is that there exist matrices $P,Q_x\in\mathbb S_+^{n}$, $Q_{\tilde u} \in \mathbb S_+^{m+p}$, and $\Lambda \in\mathbb D_+^\nu$, such that 
\begin{equation}\label{eq:dISS_suff_cond_untraned_model}
\begin{bmatrix}
    P - Q_x 
    & -\tilde{A}_x^\top \Lambda 
    & 0 
    & A_x^\top P \\
    -\Lambda \tilde{A}_x 
    & 2\Lambda - \Lambda \tilde{B}_s^{\mathrm{o}} - \tilde{B}_s^{\mathrm{o}\,\top}\Lambda 
    & -\Lambda \tilde{B}_{\tilde{u}} 
    & {B}_s^{\mathrm{o}\,\top} P \\
    0 
    & -\tilde{B}_{\tilde{u}}^\top \Lambda 
    & Q_{\tilde{u}} 
    & B_{\tilde{u}}^\top P \\
    P A_x 
    & P B_s^{\mathrm{o}} 
    & P B_{\tilde{u}} 
    & P
\end{bmatrix}
\succeq 0.
\end{equation}
By congruence transformation~\cite{boyd1994linear}, condition~\eqref{eq:dISS_suff_cond_untraned_model} is equivalent to the condition
\begin{equation}\label{eq:dISS_suff_cond_untraned_model_congr}
\begin{bmatrix}
    P - Q_x 
    & -\tilde{A}_x^\top \Lambda 
    & A_x^\top P 
    & 0 \\
    -\Lambda \tilde{A}_x 
    & 2\Lambda - \Lambda \tilde{B}_s^{\mathrm{o}} - \tilde{B}_s^{\mathrm{o}\,\top}\Lambda 
    & B_s^{\mathrm{o}\,\top} P 
    & -\Lambda \tilde{B}_{\tilde{u}} \\
    P A_x 
    & P B_s^{\mathrm{o}} 
    & P 
    & P B_{\tilde{u}} \\
    0 
    & -\tilde{B}_{\tilde{u}}^\top \Lambda 
    & B_{\tilde{u}}^\top P 
    & Q_{\tilde u}
\end{bmatrix}
\succeq 0.
\end{equation}
We set $P=P_\mathrm o$, $\Lambda=\Lambda_\mathrm o$, and $Q_x=(1-\bar\alpha^{2})P_\mathrm o$. By leveraging the Schur complement, condition~\eqref{eq:dISS_suff_cond_untraned_model_congr} can be rewritten as
\begin{equation}\label{eq:suff_cond_contractivity_diss}
M_\mathrm P-H_\mathrm P^\top Q_{\tilde u}^{-1}H_\mathrm P\succeq0,
\end{equation}
where $H_\mathrm P=[0,\, -\tilde B_{\tilde u}^{\!\top}\Lambda_\mathrm{o},\, B_{\tilde u}^{\!\top} P_\mathrm{o}]$ and
\[M_\mathrm{P} =
\begin{bmatrix}
    \bar{\alpha}^2 P_\mathrm{o} 
    & -\tilde{A}_x^\top \Lambda_\mathrm{o} 
    & A_x^\top P_\mathrm{o} \\
    -\Lambda_\mathrm{o} \tilde{A}_x 
    & 2\Lambda_\mathrm{o} 
        - \Lambda_\mathrm{o} \tilde{B}_s^{\mathrm{o}} 
        - \tilde{B}_s^{\mathrm{o}\,\top}\Lambda_\mathrm{o} 
    & {B_s^{\mathrm{o}}}^\top P_\mathrm{o} \\
    P_\mathrm{o} A_x 
    & P_\mathrm{o} B_s^{\mathrm{o}} 
    & P_\mathrm{o}
\end{bmatrix}.\]
Since~\eqref{eq:RENcontractivity_Lyapunov} holds by assumption, it follows that
$M_\mathrm{P} \succeq \bar{\ell} I_{2n+\nu}$,
for some \(\bar{\ell} \in \mathbb{R}_{+}\).  
Therefore, condition~\eqref{eq:suff_cond_contractivity_diss} is satisfied if
\(
H_\mathrm{P}^\top Q_{\tilde u}^{-1} H_\mathrm{P} \preceq \bar{\ell} I_{2n+\nu},
\)
which holds if and only if
\begin{equation}\label{eq:contr_diss_final_condition}
\| H_\mathrm{P}^\top Q_{\tilde u}^{-1} H_\mathrm{P} \| \leq \bar{\ell}.
\end{equation}
Note that there always exists \(Q_{\tilde u}\in\mathbb S_+^{m+p}\) which ensures \eqref{eq:contr_diss_final_condition}.  
For example, noting that
\(
\|H_\mathrm{P}^\top Q_{\tilde u}^{-1} H_\mathrm{P}\| 
    \le \|H_\mathrm{P}\|^{2} \, \|Q_{\tilde u}^{-1}\|\leq\frac{\|H_\mathrm{P}\|^{2}}{\lambda_{\mathrm min}(Q_{\tilde u})},
\)
we can choose
$Q_{\tilde u} = \diag(\gamma_1,\dots,\gamma_{m+p})$, with $\gamma_i \ge \|H_\mathrm{P}\|^{2}/\bar{\ell}$ for all $i=1,\dots,m+p$.\smallskip\\
The proof of \emph{(ii)} follows the same arguments of the proof of~\cite[Proposition~3.1]{d2025data}.\\  Defining $\tilde u_\mathrm d(k)\coloneq[u_\mathrm d(k)^\top,\,y_\mathrm d(k)^\top]^\top$, the $\delta$ISS property of~\eqref{eq:RNN_model_yu} implies that there exist functions $\beta\in\mathcal{KL}$ and $\gamma\in\mathcal{K}_\infty$ such that for any $k\in\mathbb Z_+$, it holds that
\[\norm{x_\mathrm d(k)-x(k)}\leq\beta(\norm{x_\mathrm d(0)-x(0)},k)+\gamma(\max_{h\geq0}\norm{\tilde u_\mathrm d(h)-\tilde u(h)}).\]
Since
\(\tilde u_\mathrm d(k)-\tilde u(k)=[
    0,\,\eta(k)^\top]
^\top,\)
and $\eta(k)\leq\bar\eta$ by assumption,
it follows that $\max_{h\geq0}\norm{\tilde u_\mathrm d(h)-\tilde u(h)}\leq\bar\eta$, and therefore that
\(\norm{x_\mathrm d(k)-x(k)}\leq\bar w_x(k)\).\smallskip\\
To prove \emph{(iii)}, we define $\Delta x\coloneq x_\mathrm d(k) -x(k)$,  $\Delta s\coloneq s_\mathrm d(k) -s(k)$, $v_\mathrm d\coloneq\tilde A_x x_\mathrm d(k)+\tilde B_uu_\mathrm d(k)+\tilde B_s^\mathrm os_\mathrm d(k)+\tilde B_yy_\mathrm d(k)$, and $v\coloneq\tilde A_x x(k)+\tilde B_uu(k)+\tilde B_s^\mathrm os(k)+\tilde B_y y(k)$.
It follows that
\begin{equation}\label{eq:dv_evolution}
    \Delta v\coloneq v_\mathrm d-v=\tilde A_x\Delta x+\tilde B_s^\mathrm o\Delta s+\tilde B_y\eta(k).
\end{equation}
By the mean value theorem, since $\sigma^{(i)}(\cdot)$ is continuous and differentiable, for all $v^{(i)},v^{(i)}+\Delta v^{(i)}\in\R$, there exists a scalar $v_i^\star$ such that $v_i^\star\in[v^{(i)},\,v^{(i)}+\Delta v^{(i)}]$ if $\Delta v^{(i)}\geq 0$, or $v_i^\star\in[v^{(i)}+\Delta v^{(i)},\,v^{(i)}]$ if $\Delta v^{(i)}<0$, such that 
\begin{equation}\label{eq:delta_s_single_component}
\begin{aligned}
    \Delta s^{(i)}&\coloneq \sigma^{(i)}(v^{(i)}+\Delta v^{(i)})-\sigma(v^{(i)})\\
    &=\delta\sigma_i(v_i^\star)\Delta v^{(i)},\quad \forall i=1,\dots,\nu
    \end{aligned}
\end{equation}
where $\delta\sigma_i(v_i^\star)\coloneq\frac{\partial\sigma^{(i)}(v^{(i)})}{\partial v^{(i)}}\Bigg|_{v_i^\star}$.\\
Defining $\Sigma(k)\coloneq\diag(\delta\sigma_1(v_1^\star),\dots,\delta\sigma_\nu(v_\nu^\star))$ and using~\eqref{eq:dv_evolution},
we can write~\eqref{eq:delta_s_single_component} in compact form as
\begin{equation}\label{eq:deltas_SM}
\Delta s=\Sigma(k)\Delta v=\Sigma(k)(\tilde A_x\Delta x+\tilde B_s^\mathrm o\Delta s+\tilde B_y\eta(k)).
\end{equation}
In view of Assumption~\ref{ass:sigmoid_function}, it holds that $0<\delta\sigma_i(v_i^\star)\leq 1$, for all $i=1,\dots,\nu$, which implies $\Sigma(k)\in\mathbb D_+^\nu$ and $\Sigma(k)\preceq I_\nu$.\\
Since \eqref{eq:RNN_model} satisfies condition~\eqref{eq:RENcontractivity_Lyapunov} by assumption, which implies $M_\mathrm P\succ 0$, it follows that there exist $\Lambda_\mathrm o\in\mathbb D_+$ such that 
\begin{equation}\label{eq:wel_posedness_untrained_model}
    2\Lambda_\mathrm{o} - \Lambda_\mathrm{o}\tilde{B}_s^{\mathrm{o}} - \tilde{B}_s^{\mathrm{o}\,\top}\Lambda_\mathrm{o}\succ0.
\end{equation}
According to~\cite[Lemma~1]{ravasio2025developmentvelocityformclass}, condition~\eqref{eq:wel_posedness_untrained_model} implies that the matrix $I_\nu-\Sigma(k)\tilde B_s^{\mathrm o}$ is full rank and hence invertible. Consequently, solving \eqref{eq:deltas_SM} for $\Delta s$ yields
\(\Delta s=(I_\nu-\Sigma(k)\tilde B_s^\mathrm o)^{-1}\Sigma(k)(\tilde A_x\Delta x+\tilde B_y\eta(k)).\)
Taking the norm of $\Delta s$, it follows that
$\norm{\Delta s}\leq\norm{(I_\nu-\Sigma(k)\tilde B_s^\mathrm o)^{-1}\Sigma(k)\tilde A_x}\norm{\Delta x}+\norm{(I_\nu-\Sigma(k)\tilde B_s^\mathrm o)^{-1}\Sigma(k)\tilde B_y}\norm{\eta(k)}.$
Finally, exploiting the bounds $\|\Delta x\|\leq \bar w_x(k)$ and $\|\eta(k)\|\leq \bar\eta$, it follows that $\|\Delta s\|\leq \bar w_s(k)$, completing the proof.
\hfill{}$\square$\smallskip\\
\textbf{Proof of Proposition~\ref{prop:diss_opt}}
To prove Proposition~\ref{prop:diss_opt}, we need to show that condition~\eqref{eq:diss_lmi} implies~\eqref{eq:dISS_suff_cond}, and hence it guarantees the $\delta$ISS of \eqref{eq:RNN_model_ss}.\\
Since $\tilde Q = \operatorname{diag}(Q_\mathrm C,Q_\mathrm D.Q_{s})$ and $H=\theta^\star\tilde Q$, it follows that $H_x = C Q_\mathrm C$, $H_u = D Q_\mathrm D$, and $H_s = D_s Q_{s}$. 
Moreover, recalling that $A=A_x+B_yC$, $B=B_u+B_yD$, $B_s=B_s^\mathrm o+B_yD_s$, $\tilde A=\tilde A_x+\tilde B_yC$, $\tilde B=\tilde B_u+\tilde B_yD$, and $\tilde B_s=\tilde B_s^\mathrm o+\tilde B_yD_s$, condition~\eqref{eq:diss_lmi} is equivalent to
\begin{equation}\label{eq:diss_lmi_2}
            \begin{bmatrix}
            Q_\mathrm C-\tilde Q_x & -Q_\mathrm C\tilde{A}^\top & 0 &Q_\mathrm CA^\top\\ -\tilde{A}Q_\mathrm C & 2Q_{s} - \tilde B_s Q_{s} - Q_{s}\tilde B_s^\top & -\tilde{B}Q_\mathrm D & Q_{s}B_s\\
            0&-Q_\mathrm D\tilde B^\top& \tilde Q_u & Q_\mathrm DB^\top\\
            AQ_\mathrm C & B_sQ_{s} & BQ_\mathrm D & Q_\mathrm C\end{bmatrix}\succeq0.
\end{equation}
Now, set $P= Q_\mathrm C^{-1}$, $Q_x=P\tilde Q_xP$, $Q_u=Q_\mathrm D^{-1}\tilde Q_uQ_\mathrm D^{-1}$, and $\Lambda=Q_{s}^{-1}$. Left- and right-multiplying~\eqref{eq:diss_lmi_2} by $M=\diag(P,\Lambda,Q_\mathrm D^{-1},P)$ and $M^\top$, we obtain~\eqref{eq:dISS_suff_cond}, completing the proof.
\hfill{}$\square$\smallskip\\
\textbf{Proof of Proposition~\ref{prop:diss_submodel}}
To prove Proposition~\ref{prop:diss_submodel}, we rewrite the dynamics~\eqref{eq:PI_RNN_subsystem_i} with respect to the input $v_i$ as
\begin{equation}\label{eq:PI_RNN_subsystem_i_v}
    \begin{aligned}
        &x_i(k+1){=}A_{x,i}x_i(k)+B_{v,i}^\mathrm ov_i(k){+}B_{s,i}^\mathrm os_i(k){+}B_{y,i}y_i(k)\\
        &s_i(k)=\sigma_i(\tilde A_{x,i}x_i(k)+\tilde B_{v,i}^\mathrm ov_i(k){+}\tilde B_{s,i}^\mathrm os_i(k){+}\tilde B_{y,i}y_i(k))\\
        &y_i(k)= C_{i,i}x_i(k){+} D_{v,i}v_i(k){+}D_{s,i}s_i(k)
    \end{aligned}
\end{equation}
Defining $B_{v,i}\coloneq B_{v,i}^\mathrm o+B_{y,i}D_{v,i}$ and $\tilde B_{v,i}\coloneq \tilde B_{v,i}^\mathrm o+\tilde B_{y,i}D_{v,i}$, \eqref{eq:PI_RNN_subsystem_i_v} can be rewritten in state space form as
\begin{equation}\label{eq:RNN_submodel_ss_v}
    \begin{aligned}
        &x_i(k+1)=A_ix_i(k)+B_{v,i}v_i(k)+B_{s,i}s_i(k)\\
        &s_i(k)=\sigma_i(\tilde A_ix_i(k)+\tilde B_{v,i}v_i(k)+\tilde B_{s,i}s_i(k))\\
        &y_i(k)= C_{i,i}x_i(k)+ D_{v,i}v_i(k)+D_{s,i}s_i(k)
    \end{aligned}
\end{equation}
Since the dynamics~\eqref{eq:RNN_submodel_ss_v} has the same structure as~\eqref{eq:RNN_model_ss}, condition~\eqref{eq:diss_lmi_submodel} provides a sufficient condition for the $\delta$ISS of~\eqref{eq:RNN_submodel_ss_v} in view of Proposition~\ref{prop:diss_opt}.
\hfill$\square$\smallskip\\
\textbf{Proof of Proposition~\ref{prop:diss_non_overlapping}}
In view of \eqref{eq_cstr_diss_non_overl_H}-\eqref{eq_cstr_diss_non_overl_Q}, it follows that for all $i\in\mathcal I$, $(H_x)_{\{i\},\{j\}}=C_{i,j}(Q_\mathrm C)_{\mathcal N_{x,i}^\mathcal M\cup\{i\},\mathcal N_{x,i}^\mathcal M\cup\{i\}}$ if $j\in\mathcal N_{x,i}^\mathcal M\cup\{i\}$ and  $(H_x)_{\{i\},\{j\}}=C_{i,j}=0$ if $j\notin\mathcal N_{x,i}^\mathcal M\cup\{i\}$. Therefore, it follows that $H_x = C Q_{\mathrm C}$. By applying a similar reasoning, it is possible to show that $H_u = D Q_{\mathrm D}$, and $H_s = D_s Q_{s}$. Therefore, assuming that~\eqref{eq:diss_lmi} holds, $\mathcal M$ is $\delta$ISS in view of Proposition~\ref{prop:diss_opt}.
\hfill$\square$

\printcredits

\bibliographystyle{cas-model2-names}

\bibliography{cas-refs}



\end{document}